\theoremstyle{plain}
\newtheorem{thm}{Theorem}
\newtheorem{lem}[thm]{Lemma}
{\bfseries}{\itshape}
{\bfseries}{\itshape}
\newtheorem{proposition}[thm]{Proposition}
\newtheorem{cor}[thm]{Corollary}
\newtheorem{corollary}[thm]{Corollary}
\newtheorem{lemma}[thm]{Lemma}
\theoremstyle{definition}
\newtheorem{definition}[thm]{Definition}
\theoremstyle{remark}
\newtheorem{remark}[thm]{Remark}
 \renewcommand{\leq}{\leqslant} 
\renewcommand{\geq}{\geqslant} 
\newcommand{\F}{\ensuremath{\mathbb{F}}}
\newcommand{\Fq}{\ensuremath{\mathbb{F}_q}}
\newcommand{\Fqm}{\ensuremath{\mathbb{F}_{q^m}}}
\newcommand{\fqm}{\Fqm}
\newcommand{\mat}[1]{\ensuremath{\boldsymbol{#1}}}
\newcommand{\code}[1]{\ensuremath{\mathscr{#1}}}
\newcommand{\AC}{\code{A}}
\newcommand{\BC}{\code{B}}
\newcommand{\CC}{\code{C}}
\newcommand{\DC}{\code{D}}
\newcommand{\GC}{\code{G}}
\newcommand{\Mm}{\mat{M}}
\newcommand{\av}{\mat{a}}
\newcommand{\bv}{\mat{b}}
\newcommand{\cv}{\mat{c}}
\newcommand{\dv}{\mat{d}}
\newcommand{\vv}{\mat{v}}
\newcommand{\xv}{{\mat{x}}}
\newcommand{\yv}{{\mat{y}}}
\newcommand{\GRS}[3]{\text{\bf GRS}_{#1}(#2,#3)}
\newcommand{\Alt}[3]{\code{A}_{#1}(#2, #3)}
\newcommand{\Goppa}[2]{\code{G}(#1, #2)}
\newcommand{\tr}{\operatorname{Tr}}
\newcommand{\Tr}{\tr}
\newcommand{\starp}[2]{{#1} \star {#2}}
\newcommand{\sq}[1]{#1^{\star 2}}
\newcommand{\sqb}[1]{\left(#1\right)^{\star 2}}
\newcommand{\eqdef}{\stackrel{\text{def}}{=}}
\newcommand{\Span}[2]{\left\langle \, #1 \, \right\rangle_{#2}}
\newcommand{\Fqspan}[1]{\left\langle \, #1 \, \right\rangle_{\Fq}}
\newcommand{\Fqmspan}[1]{\left\langle \, #1 \, \right\rangle_{\Fqm}}
\newcommand{\floor}[1]{\left\lfloor #1 \right\rfloor}
\newcommand{\ceil}[1]{\left\lceil #1 \right\rceil}
\newcommand{\ea}{{e_{\AC}}}
\newcommand{\eg}{{e_{\GC}}}
\title{On the dimension and structure of the square of the dual of a Goppa code}
\author{Rocco Mora}
\address{Inria, 2 rue Simone Iff, 75012 Paris, France\\
  Sorbonne Universit\'es, UPMC Univ Paris 06}
\email{rocco.mora@inria.fr}
\author{Jean--Pierre Tillich}
\address{Inria, 2 rue Simone Iff, 75012 Paris, France}
\email{jean-pierre.tillich@inria.fr}
\begin{document}
\maketitle
\begin{abstract}
    The Goppa Code Distinguishing (GD) problem asks to distinguish efficiently a generator matrix of a Goppa code from a randomly drawn one. We revisit a distinguisher for alternant and Goppa codes through a new approach, namely by studying the dimension of square codes. We provide here a rigorous upper bound for the dimension of the square of the dual of an alternant or Goppa code, while the previous approach only provided algebraic explanations based on heuristics.  Moreover, for Goppa codes, our proof extends to the non-binary case as well, thus providing an algebraic explanation for the distinguisher which was missing up to now. All the upper bounds are tight and match experimental evidence. Our work also introduces new algebraic results about products of trace codes in general and of dual of alternant and Goppa codes in particular, clarifying their square code structure. This might be of interest for cryptanalysis purposes. 
\end{abstract}

\section{Introduction}

\subsection*{The McEliece scheme.}
The McEliece encryption scheme \cite{M78}, which dates back to 1978, is a code-based cryptosystem built upon the family of binary Goppa codes. It is equipped with very fast encryption and decryption algorithms and has very small ciphertexts. It is also widely viewed as a viable quantum safe cryptosystem,  since the best quantum algorithm for breaking it \cite{KT17a} has exponential complexity and the corresponding exponent barely improves the exponent of the best classical algorithm \cite{BM17} by about $40$ percent.

Over the years, the attempts to attack McEliece scheme moved in two main directions. One hand, we have \textit{message-recovery attacks}. They consist in inverting the McEliece encryption without finding a trapdoor and make use of general decoding algorithms. Despite considerable improvements \cite{LB88,S88,CC98,MMT11,BJMM12,MO15,BM17}, all these algorithms have exponential complexity. The parameters of McEliece-like schemes have then been intentionally chosen to thwart this attack, because it is considered as the main threat to the scheme. Despite all these efforts, the original McEliece cryptosystem \cite{M78} based on binary Goppa codes remains, after more than forty years, unbroken, be it by a classical or a quantum computer. It is now the oldest public-key cryptosystem with this feature.

The other way to attack the cryptosystem is by seeking to recover the private key.
For a long time it was widely believed that even a simpler task which is just to distinguish efficiently a generator matrix of a Goppa code from a randomly drawn generator matrix with non negligible probability was unfeasible. This is the so called \textit{Goppa Code Distinguishing (GD) problem} as introduced by the authors of \cite{CFS01}. The nice feature of this problem is that it is possible to devise a security proof for the McEliece scheme based solely on the intractability of this problem and decoding a generic linear code \cite{S10}. 
The belief about GD problem hardness was basically justified by the fact that Goppa codes behave like random codes in many aspects. For instance, they asymptotically meet the Gilbert-Varshamov bound, their weight distribution is roughly the same as those of random codes and they generally have a trivial permutation group. The absence of significant breakthrough in key-recovery attacks also strengthened the idea that the Goppa Code distinguishing problem is difficult. This problem was actually used for a long time as a problem which basically captures the hardness of recovering the private key of a Goppa code.

\subsection*{A distinguisher for high rate.}

However, this belief was severely questioned in \cite{FGOPT11,FGOPT13} which gave a polynomial time algorithm that distinguishes between Goppa codes (or more generally alternant codes) and random ones from their generator matrices at least for very high rate codes.  It is based on the kernel of a linear system related to an algebraic system that encodes the key-recovery problem for McEliece cryptosystem instantiated with alternant or Goppa codes. Indeed, it was shown to have an unexpectedly high dimension. This distinguisher was later on given another interpretation in \cite{MP12}, where it was proved that this dimension is related to the dimension of the square of the dual of the public code. 
The algebraic explanations given in \cite{FGOPT13} do not represent however a rigorous proof of the dimension of the kernel sought, but they rely on heuristic considerations. Indeed, while a set of vectors is proposed as candidate for the kernel basis, its elements are neither proved to be independent nor a set of generators. 
Despite that the experiments run in \cite{FGOPT13} show a regular behavior when alternant codes are defined by picking at random support and multiplier vectors, it is possible to artificially choose alternant codes whose kernel dimensions are even larger than for random ones. Moreover, in the case of Goppa codes, even if a general formula for the dimension of the kernel was provided which  matches the experimental evidence, an algebraic explanation was only provided in the case of binary Goppa codes with square free Goppa polynomials. This explanation crucially relies on the fact that binary Goppa codes are in this case also Goppa codes of a higher degree (with a Goppa polynomial being the square of the original polynomial). Clearly, this approach does not generalize 
to non binary Goppa codes.

\subsection*{Our contribution}
In the present article, we revisit the distinguisher for random alternant codes and Goppa codes. We do so  by exploiting the link given by \cite{MP12}. Indeed we provide a rigorous upper bound on the dimension of the square code of the dual of an alternant or a Goppa code that coincides with the experiments. By using \cite{MP12}, this also gives a lower bound on the dimension of the kernel of the matrix considered in \cite{FGOPT13}. Together with results about the typical dimension of the square of random codes \cite{CCMZ15}, this provides the first rigorous analysis about the effectiveness of the approach pioneered in \cite{FGOPT11}, because the typical dimension of the square of a random code 
is way larger than this upper-bound on the dimension of the square of the dual of a Goppa or alternant code.

 Our approach relies on several new ingredients
\begin{itemize}
\item a new result about the square of trace codes showing that if essentially the square of a code is abnormally small then the square of its trace code is also abnormally small in a certain region of parameters. By interpreting the dual of an alternant code or a Goppa code as the trace of a generalized Reed-Solomon code (whose dimension of the square is known to be abnormally small \cite{W10}) this shows that the square of a dual of an alternant code or a Goppa code is abnormally small.
\item While this approach explains rigorously why alternant codes or Goppa codes can be distinguished for extremely large rates, lower rates require a much delicate analysis, in particular in the Goppa case. We do so, by noticing that the square of a trace of a code $\CC$ can be interpreted as a sum of traces of products of $\CC$ with $\CC^{q^i}$ (which denotes $i$ applications of the Frobenius map to $\CC$). In the case of  Goppa codes, we show that the traces of these products turn out to be duals of alternant codes of a remarkably low degree at least for small values of $i$ (see Theorem \ref{thm:goppa_e_qeg_1}). This accounts for the remarkably low dimension of the square of the dual of Goppa codes in all cases considered in \cite{FGOPT13}. 
\end{itemize} 
Interestingly enough,  the latter argument applies to {\em all} kinds of Goppa codes, be they binary or not and provides now not only a rigorous explanation of the distinguisher found in \cite{FGOPT11}, but also covers the non-binary Goppa code case as well. 
Note that even if this approach is not able to distinguish the Goppa codes proposed in the NIST competition as shown in Table \ref{table: comparison_rates}, because it only works for very high rate Goppa codes, this still raises the issue whether this distinguishing approach can be improved to lower the dimension of the Goppa codes that can be distinguished by this approach. The better understanding of the distinguisher obtained here might help to address this issue.

\begin{table}
\begin{tabular}{|c | c | c ||c | c|| c | c |} 
 \hline
 McEliece parameter  & $n$ & $m$ & $r$ in & $R$ in  & Largest  & Corresponding  \\ 
 set & & & \cite{BCLMNPPSSSW19} & \cite{BCLMNPPSSSW19} & distinguishable $r$ & $R$\\
 \hline\hline
 kem/mceliece348864 & 3488 & 12 & 64 & 0.77982 & 12 & 0.95872 \\ 
 \hline
 kem/mceliece460896 & 4608 & 13 & 96 & 0.72917 & 12 & 0.96615 \\ 
 \hline
 kem/mceliece6688128 & 6688 & 13 & 128 & 0.75120 & 15 & 0.97084 \\ 
 \hline
kem/mceliece6960119 & 6960 & 13 & 119 & 0.77773 & 16 & 0.97011 \\ 
 \hline
 kem/mceliece8192128 & 8192 & 13 & 128 & 0.79688 & 19 & 0.96985 \\
 \hline
 \end{tabular}
\caption{Comparison between Classic McEliece  and smallest distinguishable code rates.
This table provides a comparison between the parameters proposed for Classic McEliece and the largest order $r$ of a binary Goppa code that is distinguishable, with the corresponding relative rate $R$. 
We see in this table that the code rates suggested for Classic McEliece oscillate between 0.7 and 0.8 \cite[Section 3]{BCLMNPPSSSW19}, while for the same length $n$ and degree of the field extension $m$, the distinguisher works for rates closer to 1, meaning that the Goppa order $r$ must be smaller. }
 \label{table: comparison_rates}
 
\end{table}

\section{Notation and prerequisites} \label{section: notation}
In this section, we fix notation used throughout the article. We also recall basic definitions and well-known results about subfield subcodes and trace codes, component-wise products, square codes and some algebraic codes derived from generalized Reed-Solomon codes.

Let $\F$ be a generic finite field, $\Fq$ and $\Fqm$ the finite fields with $q$ and $q^m$ elements respectively, where $q$ denotes a prime power, and $m$ is a positive integer.  Given $\vv_1,\dots,\vv_k \in \F^n$, we denote with $\Span{\vv_1,\dots,\vv_k}{\F}$ the subspace of vectors in $\F^n$ spanned by $\{\vv_1,\dots,\vv_k\}$. An $[n,k]$-code over $\F$ is a linear subspace of vectors in $\F^n$ of dimension $k$. The positive integer $n$ is called the \textit{code length}. 

Vectors and matrices will be denoted by bold letters $\xv$, 
$\Mm$, and are capitalized for matrices. We will also use for a function $f$ acting on $\F$ and a vector $\xv=(x_i)_{1 \leq i \leq n}$ in $\F^n$ by $f(\xv)$ 
the vector $(f(x_i))_{1 \leq i \leq n}$.

\subsection{Subfield subcodes and trace codes}
In this article we will often consider a finite field $\Fqm$ and its subfield $\Fq$. It is therefore useful to fix a basis of $\Fqm$ over $\Fq$:
\[
\{\alpha_0,\dots,\alpha_{m-1}\}.
\] 
We will also make use of a \textit{normal basis}
\[
\{\beta,\beta^q,\dots,\beta^{q^{m-1}}\}
\]
whenever fruitful.

A useful linear map from $\Fqm$ to its subfield $\Fq$ is the trace operator.
\begin{definition}
Given the finite field extension $\Fqm/\Fq$, we define the \textit{trace} operator $\tr_{\Fqm/\Fq} \colon \Fqm \to \Fq$ for all $x\in \Fqm$ as
\[
\tr_{\Fqm/\Fq}(x)=\sum_{i=0}^{m-1} x^{q^i}.
\]
The definition extends to vectors $\xv\in\Fqm^n$ so that the trace acts component-wise:
\[
\tr_{\Fqm/\Fq}(\xv)=(\tr_{\Fqm/\Fq}(x_1),\dots,\tr_{\Fqm/\Fq}(x_n))
\]
and to codes $\CC$ over $\Fqm$
\[
\tr_{\Fqm/\Fq}(\CC)=\{\tr_{\Fqm/\Fq}(\cv) \mid \cv \in \CC\}.
\]
\end{definition}
We remark that if $\CC=\Fqmspan{\cv_i \mid 1\le i \le k}$ then $\tr_{\Fqm/\Fq}(\CC)$ is a linear code over $\Fq$ and
\[
\tr_{\Fqm/\Fq}(\CC)=\Fqspan{\tr_{\Fqm/\Fq}(\alpha_j \cv_i) \mid 0\le j<m, 1\le i\le k}.
\]
Informally, multiplying the generators of $\CC$ inside the trace by each element $\alpha_j$ of the extension field basis takes into account the fact that $\CC$ is a code over $\Fqm$, while $\tr_{\Fqm/\Fq}(\CC)$ over the subfield $\Fq$. So, if  $\dim_{\Fqm}\CC=k$ then typically $\dim_{\Fq} \tr_{\Fqm/\Fq}(\CC)=mk$, unless $\tr_{\Fqm/\Fq}(\CC)$ coincides with the ambient space.
From now on, we will omit the extension field $\Fqm/\Fq$ and simply write $\tr$, whenever the former is clear from the context.

The main classical result linking trace codes to subfield subcodes is Delsarte's theorem.
\begin{thm}{(Delsarte's theorem \cite{D75})} \label{thm: Delsarte}
Let $\CC$ be a code over $\Fqm$. Then
\[
(\CC_{\rvert \Fq})^\perp=\tr(\CC^\perp),
\]
where $\CC_{\rvert \Fq}\eqdef\CC \cap \Fq$ denotes the subfield subcode over $\Fq$ of $\CC$.
\end{thm}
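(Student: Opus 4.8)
The plan is to prove the equality by a pair of inclusions, handling the nontrivial one by dualising over $\Fq$. Throughout, write $\langle \xv, \yv \rangle \eqdef \sum_{i=1}^n x_i y_i$ for the canonical bilinear form defining the duals (over $\Fq$ when both arguments lie in $\Fq^n$, over $\Fqm$ in general). Two standard facts drive the whole argument: the trace is $\Fq$-linear, so that $\tr(\lambda)\mu = \tr(\lambda\mu)$ whenever $\lambda \in \Fqm$ and $\mu \in \Fq$; and the trace form $(\lambda,\mu) \mapsto \tr(\lambda\mu)$ on $\Fqm$ is non-degenerate, i.e. if $\tr(\lambda\mu) = 0$ for all $\mu \in \Fqm$ then $\lambda = 0$.

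First I would prove the easy inclusion $\tr(\CC^\perp) \subseteq (\CC_{\rvert \Fq})^\perp$. Take $\yv \in \CC^\perp$ and $\xv \in \CC_{\rvert \Fq}$, so that in particular $\xv \in \CC$ and $\xv \in \Fq^n$. Since every $x_i \in \Fq$, the $\Fq$-linearity of the trace yields $\langle \tr(\yv), \xv \rangle = \sum_i \tr(y_i)x_i = \sum_i \tr(y_i x_i) = \tr\big(\langle \yv, \xv\rangle\big) = \tr(0) = 0$, where the penultimate equality uses $\xv \in \CC$ and $\yv \in \CC^\perp$. Hence $\tr(\yv) \in (\CC_{\rvert \Fq})^\perp$.

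For the reverse inclusion $(\CC_{\rvert \Fq})^\perp \subseteq \tr(\CC^\perp)$ a direct trace computation does not suffice, so I would dualise over $\Fq$. Setting $\DC \eqdef \CC^\perp$, so that $\CC = \DC^\perp$ by biduality over $\Fqm$, and using biduality over $\Fq$, this inclusion is equivalent to $(\tr(\DC))^\perp \subseteq (\DC^\perp)_{\rvert \Fq}$. To prove the latter, take $\xv \in (\tr(\DC))^\perp \subseteq \Fq^n$, so that $\tr(\langle \dv, \xv\rangle) = \langle \tr(\dv), \xv\rangle = 0$ for every $\dv \in \DC$. Now I would exploit the $\Fqm$-linearity of $\DC$: for any $\lambda \in \Fqm$ the vector $\lambda\dv$ again lies in $\DC$, whence $\tr\big(\lambda \langle \dv, \xv\rangle\big) = \tr(\langle \lambda\dv, \xv\rangle) = 0$ for all $\lambda$. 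Non-degeneracy of the trace form then forces $\langle \dv, \xv\rangle = 0$ for every $\dv \in \DC$, i.e. $\xv \in \DC^\perp$; combined with $\xv \in \Fq^n$ this gives $\xv \in (\DC^\perp)_{\rvert \Fq}$, as desired.

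The main obstacle is precisely this last step. The inclusion $(\CC_{\rvert \Fq})^\perp \subseteq \tr(\CC^\perp)$ cannot be seen by a direct computation, since that would amount to producing, for a given $\Fq$-vector orthogonal to the subfield subcode, an explicit preimage in $\CC^\perp$ under the trace map. Rerouting through the dual identity converts this existence problem into the orthogonality computation above, and the only genuinely nontrivial inputs are the non-degeneracy of the trace form and the $\Fqm$-linearity of $\DC$. It is exactly here that the hypothesis that $\CC$, hence $\DC$, is linear over the extension field $\Fqm$ and not merely over $\Fq$ is used: without closing under multiplication by every $\lambda \in \Fqm$ one could not strip off the trace.
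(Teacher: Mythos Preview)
Your proof is correct and is essentially the standard argument for Delsarte's theorem. Note, however, that the paper does not give its own proof of this statement: it is quoted as a classical result with a reference to \cite{D75}, so there is nothing to compare against. Your two-inclusion strategy, with the reverse inclusion handled by dualising to $(\tr(\DC))^\perp \subseteq (\DC^\perp)_{\rvert \Fq}$ and then invoking the $\Fqm$-linearity of $\DC$ together with non-degeneracy of the trace form, is exactly the textbook proof.
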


\subsection{Reed-Solomon, alternant and Goppa codes}
We first recall the definitions of some well-known classes of algebraic codes. In the following we will always denote with $r$ the dimension of a generalized Reed-Solomon code. We start with the definition of the latter, seen as an \textit{evaluation code}: 
\begin{definition}
  Let $\xv=(x_1,\dots,x_n)\in\F^n$ be a vector of pairwise distinct entries and $\yv=(y_1,\dots,y_n)\in\F^n$ a vector of nonzero entries. The $[n, r]$ \textit{generalized Reed-Solomon (GRS) code} with \textit{support} $\xv$ and \textit{multiplier} $\yv$ is
  \[
  \GRS{r}{\xv}{\yv}\eqdef\{(y_1 P(x_1),\dots,y_n P(x_n)) \mid P \in \F[z], \deg P < r\}
  \]
  \end{definition}
  
  This code can be conveniently generated by vectors that are component-wise (also called Schur) products of $\xv$ and $\yv$. Recall that this product is defined as
 \begin{definition}
   The \textit{component-wise product} of two vectors $\av,\bv\in\F^n$ is defined as
   \[
   \starp{\av}{\bv}\eqdef(a_1 b_1,\dots,a_n b_n).
   \]
   \end{definition}

 Since any polynomial $P$ in $\F[z]$ of degree $<r$ can be written as a linear combination over $\F$ of powers of $z$ of degree $<r$, by using the notation given above we can see the GRS code as
  \[
  \GRS{r}{\xv}{\yv}\eqdef\Fqspan{\xv^a\yv \mid 0\le a<r},
  \]
  where $\xv^a$ stands for the component-wise product $\star$ of the vector $\xv$ with itself repeated $a$ times and where we omitted the  component-wise product symbol between $\xv$ and $\yv$.
  
  The dual of a GRS code is also a GRS code, where the support and the multiplier are related to the ones of the primal code. In order to explicit such relation we introduce the polynomial
  \[
  \pi_\xv(z)\eqdef \prod_{i=1}^n (z-x_i)\in\F[z].
  \]
  \begin{proposition} \cite[Theorem~4, p.~304]{MS86}\label{pr:dual_GRS} 
  Let $\GRS{r}{\xv}{\yv}$ be a GRS code of length $n$. Its dual is also a GRS code. In particular
  \[
  \GRS{r}{\xv}{\yv}^\perp=\GRS{n-r}{\xv}{\yv^\perp},
  \]
  where 
  \[
  \yv^\perp\eqdef\left(\frac{1}{\pi'_\xv(x_1)y_1},\dots,\frac{1}{\pi'_\xv(x_n)y_n}\right)
  \]
  and $\pi'_\xv$ is the derivative of $\pi_\xv$.
  \end{proposition}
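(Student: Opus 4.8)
The plan is to combine a dimension count with an explicit orthogonality computation, reducing the claimed equality to a single inclusion. First I would note that the evaluation map $P \mapsto (y_i P(x_i))_{1\le i\le n}$ is injective on polynomials of degree $<r$, since the $x_i$ are pairwise distinct and the $y_i$ are nonzero; hence $\dim_\F \GRS{r}{\xv}{\yv}=r$ and its dual has dimension $n-r$. The same reasoning gives $\dim_\F \GRS{n-r}{\xv}{\yv^\perp}=n-r$ (note that $\pi'_\xv(x_i)\neq 0$ because $\pi_\xv$ has simple roots, so $\yv^\perp$ is well defined). It therefore suffices to establish the single inclusion $\GRS{n-r}{\xv}{\yv^\perp}\subseteq\GRS{r}{\xv}{\yv}^\perp$, i.e. that each generator of the former is orthogonal to each generator of the latter.

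A codeword of $\GRS{r}{\xv}{\yv}$ has the form $(y_i P(x_i))_i$ with $\deg P<r$, and a codeword of $\GRS{n-r}{\xv}{\yv^\perp}$ has the form $(y_i^\perp Q(x_i))_i$ with $\deg Q<n-r$, where $y_i^\perp=1/(\pi'_\xv(x_i)y_i)$. Since $y_i y_i^\perp=1/\pi'_\xv(x_i)$, the multipliers cancel in the inner product, which reduces to
\[
\sum_{i=1}^n \frac{P(x_i)Q(x_i)}{\pi'_\xv(x_i)}.
\]
Writing $R\eqdef PQ$, the degree bounds yield $\deg R\le (r-1)+(n-r-1)=n-2$, so the proposition reduces to showing that $\sum_{i=1}^n R(x_i)/\pi'_\xv(x_i)=0$ for every polynomial $R$ with $\deg R\le n-2$.

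This identity is the crux of the argument, and I would derive it from the partial fraction decomposition of the rational function $R(z)/\pi_\xv(z)$. As $\pi_\xv$ has simple roots $x_1,\dots,x_n$, the residue at $x_i$ equals $R(x_i)/\pi'_\xv(x_i)$, so that
\[
\frac{R(z)}{\pi_\xv(z)}=\sum_{i=1}^n \frac{R(x_i)}{\pi'_\xv(x_i)}\cdot\frac{1}{z-x_i}.
\]
Expanding both sides as Laurent series at infinity and comparing the coefficients of $1/z$ shows that $\sum_{i=1}^n R(x_i)/\pi'_\xv(x_i)$ equals the coefficient of $z^{n-1}$ in $R$, which vanishes since $\deg R\le n-2$. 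The point requiring care is precisely this asymptotic comparison: the bound $\deg R\le n-2$ is what makes the $1/z$ coefficient disappear, and it is no accident that it matches the splitting $r+(n-r)=n$ of the dimensions, since had we allowed $\deg R=n-1$ the sum would be the nonzero leading coefficient of $R$. Combining this orthogonality with the dimension count upgrades the inclusion to the desired equality.
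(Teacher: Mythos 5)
Your proof is correct, and a point of comparison worth making explicit: the paper itself contains no proof of this proposition --- it is quoted directly from \cite[Theorem~4, p.~304]{MS86} --- so your argument should be judged against that classical reference rather than against anything internal to the paper. The proof in MacWilliams--Sloane follows the same skeleton as yours (orthogonality of products of low-degree polynomials, then a dimension count), but it introduces the dual multiplier \emph{abstractly}: it takes $\yv^\perp$ to be a generator of the one-dimensional dual of the $[n,n-1]$ code $\GRS{n-1}{\xv}{\yv}$, observes that for $\deg P < r$ and $\deg Q < n-r$ the product $PQ$ has degree $< n-1$, and concludes by dimensions; the explicit formula $y_i^\perp = 1/(\pi'_\xv(x_i)y_i)$ is left as a separate verification. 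You instead prove that explicit formula directly, via the identity $\sum_{i=1}^n R(x_i)/\pi'_\xv(x_i)=0$ for $\deg R \le n-2$, which you derive from the partial fraction decomposition of $R(z)/\pi_\xv(z)$ and comparison of the $1/z$ coefficients at infinity. This is a genuine (if minor) difference in route, and yours buys exactly what the proposition as stated requires, namely the closed form of $\yv^\perp$. All your steps are valid over an arbitrary finite field: the partial fraction decomposition and the expansion at infinity are formal algebraic identities, the ``residue'' $R(x_i)/\pi'_\xv(x_i)$ is obtained by clearing the factor $(z-x_i)$ and evaluating, and $\pi'_\xv(x_i)=\prod_{j\neq i}(x_i-x_j)\neq 0$ as you note, so $\yv^\perp$ is well defined and has nonzero entries, making $\GRS{n-r}{\xv}{\yv^\perp}$ a legitimate GRS code of dimension $n-r$.
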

  An $\textit{alternant code}$ can be defined as the subfield subcode of a GRS code:
  \begin{definition}
    Let $n\le q^m$, for some positive integer $m$. Let $\GRS{r}{\xv}{\yv}$ be the GRS code over $\Fqm$ of dimension $r$ with support $\xv \in \Fqm^n$ and multiplier $\yv\in (\Fqm^*)^n$. The \textit{alternant code} with support $\xv$ and multiplier $\yv$ and \textit{degree} $r$ over $\Fq$ is
    \[
    \Alt{r}{\xv}{\yv}\eqdef \GRS{r}{\xv}{\yv}^\perp \cap \F_q^n.
    \]
    The integer $m$ is called \textit{extension degree} of the alternant code.
  \end{definition}
  Note that by Proposition \ref{pr:dual_GRS} an alternant code is the subfield subcode of a GRS code:
  $$
   \Alt{r}{\xv}{\yv}\eqdef \GRS{n-r}{\xv}{\yv^\perp} \cap \F_q^n.
  $$
  We use the same notation as in \cite{MS86} and use the dimension $r$ and the multiplier $\yv$ of the dual GRS code, which turns out to be more convenient in our setting.  A well-known polynomial time decoding algorithm for the family of alternant codes allows to decode up to $\frac{r}{2}$ errors.
  We observe that for $m=1$ an alternant code is simply a GRS code. Therefore from now on we will always assume $m>1$.
  
  From Delsarte's theorem (Theorem~\ref{thm: Delsarte}) and by duality,
  \begin{align}
  \Alt{r}{\xv}{\yv}^\perp&=\left(\GRS{r}{\xv}{\yv}^\perp \cap \F_q^n\right)^\perp \nonumber\\
  &=\tr\left((\GRS{r}{\xv}{\yv}^\perp)^\perp\right) \nonumber\\
  &=\tr\left(\GRS{r}{\xv}{\yv}\right). \label{eq: dual_alt}
\end{align}  
The dimension of an alternant code of order $r$ built upon an extension field of degree $m$ has therefore dimension at least $n-rm$.
  There exists a subclass of alternant codes which is particularly attractive for cryptographic purposes:
  \begin{definition}
    Let $\xv\in\Fqm^n$ be a support vector and $\Gamma\in\Fqm[z]$ a polynomial of degree $r$ such that $\Gamma(x_i)\neq 0$ for all $i \in \{1,\dots,n\}$. The \textit{Goppa code} of degree $r$ with support $\xv$ and \textit{Goppa polynomial} $\Gamma$ is defined as
    \[
    \Goppa{\xv}{\Gamma}\eqdef\Alt{r}{\xv}{\yv},
    \]
    where $\yv\eqdef\left(\frac{1}{\Gamma(x_1)},\dots,\frac{1}{\Gamma(x_n)}\right).$
  \end{definition}
  The reason why binary Goppa codes are preferable to instantiate McEliece-like schemes is that, if the Goppa polynomial has no multiple roots, there exists a polynomial time algorithm to decode up to $r$ errors. This is a direct consequence of
  \begin{thm}\label{thm: binary_Goppa->Alt} \cite{P75}
    Let $\Goppa{\xv}{\Gamma}$ be a binary Goppa polynomial $\Gamma$ of degree $r$ and without multiple roots. Then
    \[\Goppa{\xv}{\Gamma}=\Goppa{\xv}{\Gamma^2}=\Alt{2r}{\xv}{\yv},\]
    where $y_i\eqdef\frac{1}{\Gamma(x_i)^2}$ for all $1\le i \le n$.
  \end{thm}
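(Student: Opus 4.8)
The plan is to reduce the statement to the classical congruence description of Goppa codes and then exploit two features that are special to the binary, square-free setting: the syndrome of a codeword is a logarithmic derivative, and formal derivatives are perfect squares in characteristic $2$. Throughout, $q=2$ and $r=\deg\Gamma$. First I would translate the defining orthogonality relation into a congruence. Unwinding the definitions, $\cv\in\Goppa{\xv}{\Gamma}$ means $\cv\in\Fq^n$ and $\sum_{i=1}^n c_i P(x_i)/\Gamma(x_i)=0$ for every $P$ with $\deg P<r$. A standard partial-fraction computation, using that $\{P:\deg P<r\}$ has dimension $r=\deg\Gamma$, shows this is equivalent to
\[
\sum_{i=1}^n \frac{c_i}{z-x_i}\equiv 0 \pmod{\Gamma(z)},
\]
where each $1/(z-x_i)$ is the inverse of $z-x_i$ modulo $\Gamma$, well defined since $\Gamma(x_i)\neq 0$. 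The same description with $\Gamma^2$ in place of $\Gamma$ characterizes $\Goppa{\xv}{\Gamma^2}=\Alt{2r}{\xv}{\yv}$ with $y_i=1/\Gamma(x_i)^2$.

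Next, since $q=2$ every $c_i$ lies in $\{0,1\}$, so I would set $\sigma(z)=\prod_{i\in\Sup(\cv)}(z-x_i)$, the locator polynomial of $\cv$, where $\Sup(\cv)=\{i:c_i=1\}$. The key identity is
\[
\sum_{i=1}^n \frac{c_i}{z-x_i}=\sum_{i\in\Sup(\cv)}\frac{1}{z-x_i}=\frac{\sigma'(z)}{\sigma(z)},
\]
so the syndrome is the logarithmic derivative of $\sigma$. Because $\Gamma(x_i)\neq 0$ for all $i$, we have $\gcd(\sigma,\Gamma)=1$, hence $\sigma$ is invertible modulo any power of $\Gamma$. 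Consequently the two congruences above become $\Gamma\mid\sigma'$ and $\Gamma^2\mid\sigma'$ respectively, and the whole theorem reduces to the equivalence $\Gamma\mid\sigma'\iff\Gamma^2\mid\sigma'$.

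The crux is then this equivalence. In characteristic $2$ the formal derivative annihilates every even-degree term of $\sigma$ while sending each surviving odd-degree term $a z^j$ to $a z^{j-1}$ with $j-1$ even; hence $\sigma'$ is a polynomial in $z^2$, and since $\Fqm$ is a finite (hence perfect) field one may take square roots of its coefficients to write $\sigma'=w(z)^2$ for some $w\in\Fqm[z]$. Now I would invoke that $\Gamma$ has no multiple roots, i.e. $\Gamma=\prod_j p_j$ with the $p_j$ distinct irreducibles to the first power: if $\Gamma\mid w^2$, then each $p_j\mid w^2$ forces $p_j\mid w$, so $\Gamma\mid w$ and therefore $\Gamma^2\mid w^2=\sigma'$. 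The reverse implication is immediate. This proves $\Goppa{\xv}{\Gamma}=\Goppa{\xv}{\Gamma^2}$, and unwinding the definition of a Goppa code of degree $2r$ gives $\Goppa{\xv}{\Gamma^2}=\Alt{2r}{\xv}{\yv}$ with $y_i=1/\Gamma(x_i)^2$, as claimed.

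The main obstacle is the forward inclusion $\Goppa{\xv}{\Gamma}\subseteq\Goppa{\xv}{\Gamma^2}$: the reverse inclusion is trivial from the congruences, but the forward direction genuinely requires both ingredients together, namely binary arithmetic (to obtain the logarithmic-derivative form and, via it, the perfect-square structure $\sigma'=w^2$) and the square-freeness of $\Gamma$ (to promote $\Gamma\mid w$ to $\Gamma^2\mid\sigma'$). Everything else is bookkeeping once the congruence characterization and the invertibility of $\sigma$ modulo $\Gamma$ are in place.
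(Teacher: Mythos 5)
Your proof is correct: the paper itself gives no proof of this statement---it is quoted from Patterson \cite{P75}---and your argument is precisely the classical one. The chain of ingredients you use (the congruence characterization $\sum_i c_i/(z-x_i)\equiv 0 \pmod{\Gamma(z)}$, the logarithmic-derivative identity $\sum_{i}c_i/(z-x_i)=\sigma'(z)/\sigma(z)$ valid because the code is binary, the fact that $\sigma'$ is a perfect square $w^2$ in characteristic $2$ over a perfect field, and square-freeness of $\Gamma$ to promote $\Gamma\mid w$ to $\Gamma^2\mid\sigma'$) is exactly the standard route, and each step, including the invertibility of $\sigma$ modulo powers of $\Gamma$ from $\Gamma(x_i)\neq 0$, is sound.
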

$\Alt{r}{\xv}{\yv}= \GRS{r}{\xv}{\yv}^\perp \cap \F_q^n.$

\subsection*{Square of codes}
GRS codes turn out to display a very peculiar property with respect to the component-wise/Schur product of codes which is defined from the component-wise/Schur product of vectors by
 \begin{definition}
The  \textit{component-wise product of codes} $\CC,\DC$ over $\F$ with the same length $n$ is defined as
   \[
      \starp{\CC}{\DC}\eqdef \Span{\starp{\cv}{\dv} \mid \cv \in \CC, \dv \in \DC}{\F}.       \]
     If $\CC=\DC$, we call $\sq{\CC}\eqdef\starp{\CC}{\CC}$ the \textit{square code} of $\CC$. 
 \end{definition}
It is easy to see that if $\CC=\Span{\cv_1,\dots,\cv_{k_1}}{\F}$ and  $\DC=\Span{\dv_1,\dots,\dv_{k_2}}{\F}$, a generating set for $\starp{\CC}{\DC}$ over $\F$ is given by a set of $k_1 k_2$ vectors,
\[\{\starp{\cv_i}{\dv_j} \mid 1\le i\le k_1, 1\le j \le k_2\}.
\]
However, when $\CC \cap \DC \neq \{\mathbf{0}\}$, some of the above elements are obviously redundant. In the extreme case, i.e. when $\CC = \DC$, the square code dimension is much smaller than $k^2$, where $k=\dim_{\F} \CC$. This is a consequence of the commutative property of the component-wise product: $\starp{c_i}{c_j}=\starp{c_j}{c_i}$. Thus we can give the following folklore result appearing for instance in \cite{CCMZ15}.
\begin{proposition}\label{prop: dim_sq}
 Let $\CC$ be a linear code  over $\F$ of dimension $k$ and length $n$. Then
\[ \dim_{\Fq} \sq{\CC} \le \min\left(n,\binom{k+1}{2}\right).\]
\end{proposition}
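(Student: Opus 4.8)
The plan is to exhibit an explicit spanning set for $\sq{\CC}$ of the required cardinality and then to invoke the trivial ambient-space bound, taking the minimum of the two. First I would fix a basis $\cv_1,\dots,\cv_k$ of $\CC$ over $\F$. Since the component-wise (Schur) product $\star$ is bilinear, any two elements $\cv=\sum_{i}\lambda_i \cv_i$ and $\dv=\sum_{j}\mu_j \cv_j$ of $\CC$ satisfy $\starp{\cv}{\dv}=\sum_{i,j}\lambda_i\mu_j\,(\starp{\cv_i}{\cv_j})$, so every generator $\starp{\cv}{\dv}$ of $\sq{\CC}$ already lies in the $\F$-span of the products of basis vectors. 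Hence $\sq{\CC}=\Span{\starp{\cv_i}{\cv_j} \mid 1\le i,j\le k}{\F}$, a spanning set of size $k^2$.

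Next I would shrink this spanning set using the commutativity $\starp{\cv_i}{\cv_j}=\starp{\cv_j}{\cv_i}$ of the product. This allows me to discard all products with $i>j$ and keep only those with $i\le j$. The number of index pairs $(i,j)$ with $1\le i\le j\le k$ equals $\binom{k}{2}+k=\binom{k+1}{2}$, namely the $\binom{k}{2}$ off-diagonal pairs $i<j$ together with the $k$ diagonal pairs $i=j$. Consequently $\sq{\CC}$ is spanned by at most $\binom{k+1}{2}$ vectors, which bounds $\dim_{\F}\sq{\CC}$ by $\binom{k+1}{2}$.

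Finally, since $\sq{\CC}$ is by construction a linear subspace of the ambient space $\F^n$, we trivially have $\dim_{\F}\sq{\CC}\le n$. Combining this with the counting bound from the previous paragraph yields $\dim_{\F}\sq{\CC}\le\min\!\left(n,\binom{k+1}{2}\right)$, as claimed.

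There is essentially no hard step in this argument; it is a folklore observation and the proof is purely formal. The only point that truly needs a word of justification is the passage from arbitrary elements $\cv,\dv\in\CC$ to products of basis vectors, which is exactly the bilinearity of $\star$ together with the fact that the $\cv_i$ generate $\CC$. Everything else reduces to elementary counting and the obvious containment $\sq{\CC}\subseteq\F^n$.
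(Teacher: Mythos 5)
Your proof is correct and matches the paper's own treatment: the paper establishes this folklore bound (citing CCMZ15) via exactly the same observations you make, namely that products of basis vectors span $\sq{\CC}$ by bilinearity, that commutativity of $\star$ reduces the spanning set to the $\binom{k+1}{2}$ pairs with $i\le j$, and that $\sq{\CC}\subseteq\F^n$ gives the bound $n$. Nothing is missing; your write-up is simply a more explicit version of the argument the paper sketches in the paragraph preceding the proposition.
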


For a random linear code $\CC$ whose square does not fill the full space, the dimension of its square code is  $\binom{k+1}{2}$ with high probability, where $k$ is the dimension of $\CC$. However, there exist families of codes for which the inequality in Proposition~\ref{prop: dim_sq} is strict. This clearly translates into a distinguisher from random codes. Generalized Reed-Solomon codes represent an example of such behavior \cite{W10}. We namely have
\begin{proposition} \label{pr: square_GRS}
Let $\GRS{r}{\xv}{\yv}$ be a GRS code with support $\xv$, multiplier $\yv$ and dimension $k$. We have $\sq{\GRS{k}{\xv}{\yv}}=\GRS{2k-1}{\xv}{\yv^2}$.
Hence, if $k\le\frac{n+1}{2}$,
\[\dim_{\Fqm}\sq{(\GRS{r}{\xv}{\yv})}=2k-1.
\]
\end{proposition}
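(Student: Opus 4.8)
The plan is to reduce everything to the monomial description of a GRS code recalled above, namely $\GRS{k}{\xv}{\yv}=\Fqmspan{\xv^a\yv \mid 0\le a<k}$, together with the fact (noted right after the definition of the square code) that a generating set of the square of a code is furnished by all pairwise component-wise products of a generating set of the code.

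First I would take the family $\{\xv^a\yv \mid 0\le a<k\}$ as generators of $\GRS{k}{\xv}{\yv}$ and compute the Schur product of two of them. By commutativity and associativity of the component-wise product,
\[
\starp{\left(\xv^a\yv\right)}{\left(\xv^b\yv\right)}=\xv^{a+b}\,\yv^2,\qquad 0\le a,b<k,
\]
so $\sq{\GRS{k}{\xv}{\yv}}$ is spanned by the vectors $\xv^{c}\yv^2$ with $c=a+b$. The next step is the elementary observation that, as $a$ and $b$ range independently over $\{0,\dots,k-1\}$, the exponent $c=a+b$ takes \emph{exactly} the values in $\{0,1,\dots,2k-2\}$, every intermediate value being reached (for instance by $a=\min(c,k-1)$ and $b=c-a$). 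Consequently
\[
\sq{\GRS{k}{\xv}{\yv}}=\Fqmspan{\xv^c\yv^2 \mid 0\le c<2k-1}=\GRS{2k-1}{\xv}{\yv^2},
\]
the last equality being just the monomial description of a GRS code read with multiplier $\yv^2$ and dimension parameter $2k-1$; this is a genuine GRS code since $\yv^2$ still has only nonzero entries.

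It then remains to deduce the dimension statement. Here I would invoke the standard fact that a GRS code $\GRS{s}{\xv}{\yv'}$ whose support consists of $n$ pairwise distinct entries has dimension $\min(s,n)$: the evaluation map sending a polynomial of degree $<s$ to $\left(y'_iP(x_i)\right)_{1\le i\le n}$ is injective whenever $s\le n$, because a nonzero polynomial of degree $<s\le n$ cannot vanish at all $n$ distinct points $x_i$. Applying this with $s=2k-1$ yields $\dim_{\Fqm}\GRS{2k-1}{\xv}{\yv^2}=2k-1$ precisely under the hypothesis $2k-1\le n$, that is $k\le\frac{n+1}{2}$, which is the claimed dimension. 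There is no real obstacle in this argument, which is a routine manipulation of the Schur product; the only points deserving care are checking that the exponent set $\{a+b : 0\le a,b<k\}$ is exactly $\{0,\dots,2k-2\}$ with no gaps, so that the resulting family really is the full monomial basis of $\GRS{2k-1}{\xv}{\yv^2}$, and ensuring the dimension count is valid, which is exactly where the hypothesis $k\le\frac{n+1}{2}$ intervenes to prevent the square code from being forced to fill the ambient space.
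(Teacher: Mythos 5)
Your proof is correct and follows essentially the same route as the paper: the paper likewise expands $\sq{\GRS{k}{\xv}{\yv}}$ as the span of the products $\starp{(\xv^a\yv)}{(\xv^b\yv)}=\xv^{a+b}\yv^2$ with $0\le a,b<k$ and identifies the exponent set $\{a+b\}$ with $\{0,\dots,2k-2\}$ to conclude that the square is $\GRS{2k-1}{\xv}{\yv^2}$. Your added justification of the dimension count via injectivity of the evaluation map when $2k-1\le n$ is a standard fact the paper leaves implicit, so there is no substantive difference.
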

This follows immediately from the fact that the square of a $\GRS{r}{\xv}{\yv}$ can be written as
\begin{align*}
\sq{\GRS{k}{\xv}{\yv}}&=\Fqmspan{\starp{(\xv^a\yv)}{(\xv^b\yv)} \mid 0\le a,b < k}\\
&=\Fqmspan{\xv^{a+b}\yv^2 \mid 0\le a,b < k}\\
&=\Fqmspan{\xv^{c}\yv^2 \mid 0\le c < 2k-1}.
\end{align*}
Note that the square code dimension is here $2k-1$, i.e. it is linear in $k$ and not quadratic. Since the dual of a Reed-Solomon code is again a Reed-Solomon code, it turns out that this algebraic class is distinguishable for any rate. For other families, a square code-based distinguisher may occur only for certain rates. For instance Goppa codes (and more in general alternant codes) are distinguishable whenever the rate is high enough as we now recall. Again this is related to such square code considerations as we will explain.

\subsection*{The distinguisher of Goppa/alternant codes of \cite{FGOPT11,FGOPT13} and its relationship with square code considerations}

The dual of an alternant (or Goppa) code can also be distinguished from random codes when the primal code has a high enough rate, using the square code tool. The different behavior was already observed in \cite{FGOPT11}. Here however, the distinguisher was presented in terms of the kernel dimension $D$ of a linear system obtained by linearizing in the proper way the algebraic system that encodes the key-recovery problem for McEliece cryptosystem endowed with alternant or Goppa codes. Indeed let $\mat{P}=(p_{ij})_{i,j}$ be a generator matrix of an $[n,k]$ alternant (or Goppa) code $\CC$ in systematic form, i.e. with its first $k$ columns that form an identity block and consider the following linear system
\[
\mathscr{L}_p =\left\{\sum_{k+1\le j < j' \le n} p_{ij}p_{ij'} Z_{jj'}=0 \mid 1\le i \le k\right\}.
\]
The dimension $D$ of the solution space of this system turns out to be much smaller in the case of high rate Goppa or alternant codes than for random codes. A formula for $D$ coinciding with experimental evidence was given in \cite{FGOPT13} together with a convincing algebraic explanation for alternant and binary Goppa codes. 

It has been proved in \cite{MP12} that such $D$ is related to the dimension of the square of the dual code $\CC^\perp$. Indeed, from \cite[Proposition~1]{MP12},
\begin{equation}\label{eq:MP12}
\dim_{\F} \sqb{{\CC^\perp}}=\binom{\dim_{\F}({\CC^\perp}) +1}{2}-D.
\end{equation}
In terms of dimensions of the square codes, the formula for $D$ given in \cite{FGOPT13} together with \eqref{eq:MP12} predicts for a generic alternant code $\Fq$ of length $n$ and extension degree $m$ that
\begin{equation}
\label{eq:prediction_alternant}
\dim_{\Fq} \sq{(\Alt{r}{\xv}{\yv}^\perp)} = \min\left\{n, \binom{rm+1}{2}-\frac{m}{2}(r-1)\left((2e_{\AC}+1)r-2\frac{q^{e_{\AC}}-1}{q-1}\right)\right\},
\end{equation}
whereas for a generic Goppa code $\Goppa{\xv}{\Gamma}$  of length $n$ over $\Fq$ with Goppa polynomial $\Gamma(X) \in \Fqm[X]$ of degree $r$:
\begin{eqnarray} 
  \dim \sq{(\Goppa{\xv}{\Gamma}^\perp)}& = & \min\left\{n,\binom{rm+1}{2}-\frac{m}{2}(r-1)(r-2)\right\},\;\;\text{if $r < q-1$} \label{eq:prediction_Goppa_e=0}\\
  \dim \sq{(\Goppa{\xv}{\Gamma}^\perp)}& =&  \min \left\{n,\binom{rm+1}{2}-\frac{m}{2}r\left((2e_{\GC}+1)r-2(q-1)q^{e_{\GC}-1}-1\right)\right\}, \;\;\text{else,}
  \label{eq:prediction_Goppa_e>0}
  \end{eqnarray}
  where $\ea$ and $\eg$ are respectively defined by
  \begin{eqnarray*}
e_{\AC}&\eqdef &\max\{i \in \mathbb{N} \mid r\ge q^i+1\}=\floor{\log_q(r-1)}\\
e_{\GC}&\eqdef &\min \{i \in \mathbb{N} \mid r \le (q-1)^2q^{i}\} +1 = \ceil{\log_q\left(\frac{r}{(q-1)^2}\right)}+1.
\end{eqnarray*}

As shown in \cite{FGOPT13}, these formulas agree with extensive experimental evidence. 
 \section{A general result about the square of a trace code}\label{sec:general}

The dual of alternant codes and Goppa codes are trace codes of GRS codes. From Proposition \ref{pr: square_GRS} we know that square codes of GRS codes have an abnormally small dimension. A natural question is whether or not this implies that the 
square of the trace of a GRS code has itself a small dimension. More generally, this raises the following fundamental issue
of whether or not when the product of two codes $\CC$ and $\DC$ over $\fqm$ of length $n$  is smaller than $\min(n,\dim_{\Fqm} \CC \cdot\dim_{\Fqm} \DC)$ (which is the dimension we expect for random codes $\CC$ and $\DC$) then this property survives for trace codes, namely  do we have in this case
$$\dim_{\Fq} \starp{\Tr(\CC)}{\Tr(\DC)} < \min(n,\dim_{\Fq}  \Tr(\CC)  \cdot \dim_{\Fq} \Tr(\DC))?$$ This is related to open questions raised in \cite[C.4]{R15}. This is indeed the case up to some extent, due to the following proposition:

\begin{proposition}\label{pr:general}
Let $\CC$ and $\DC$ be two linear codes over $\Fqm$ with the same length $n$. Then
\[
\starp{\Tr(\CC)}{\Tr(\DC)}\subseteq\sum_{i=0}^{m-1}\Tr\left(\starp{\CC}{\DC^{q^i}}\right), \;\;\text{where $\DC^{q^i}\eqdef \{d^{q^i} \mid d \in \DC\}$.}
\]
\end{proposition}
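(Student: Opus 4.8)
The plan is to reduce the asserted inclusion to a single pointwise (scalar) identity between traces, and then lift it coordinatewise. Since $\starp{\Tr(\CC)}{\Tr(\DC)}$ is spanned over $\Fq$ by the products $\starp{\Tr(\cv)}{\Tr(\dv)}$ with $\cv\in\CC$ and $\dv\in\DC$, and since the right-hand side $\sum_{i=0}^{m-1}\Tr(\starp{\CC}{\DC^{q^i}})$ is an $\Fq$-linear subspace, it suffices to show that each such generator lies in that sum.

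First I would work in a single coordinate. For $c,d\in\Fqm$ write $\Tr(c)=\sum_{j=0}^{m-1}c^{q^j}$ and $\Tr(d)=\sum_{k=0}^{m-1}d^{q^k}$, so that $\Tr(c)\Tr(d)=\sum_{j,k}c^{q^j}d^{q^k}$. The key step is to recognize this double sum as $\sum_{i=0}^{m-1}\Tr(c\,d^{q^i})$. Indeed, expanding $\Tr(c\,d^{q^i})=\sum_{j=0}^{m-1}c^{q^j}d^{q^{i+j}}$ and summing over $i$, for each fixed $j$ the index $i+j$ runs through a complete residue system modulo $m$; because $d^{q^m}=d$, the terms $d^{q^{i+j}}$ reproduce exactly $d^{q^0},\dots,d^{q^{m-1}}$, i.e. the factor $\Tr(d)$. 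This yields the scalar identity
\[
\Tr(c)\,\Tr(d)=\sum_{i=0}^{m-1}\Tr\!\left(c\,d^{q^i}\right).
\]

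Applying this identity componentwise to $\cv$ and $\dv$ gives $\starp{\Tr(\cv)}{\Tr(\dv)}=\sum_{i=0}^{m-1}\Tr(\starp{\cv}{\dv^{q^i}})$. To conclude I would observe that $\dv^{q^i}\in\DC^{q^i}$, so $\starp{\cv}{\dv^{q^i}}\in\starp{\CC}{\DC^{q^i}}$ and hence $\Tr(\starp{\cv}{\dv^{q^i}})\in\Tr(\starp{\CC}{\DC^{q^i}})$; raising to the $q^i$-th power is additive and, since Frobenius permutes $\Fqm$, preserves $\Fqm$-scalar multiplication, so $\DC^{q^i}$ is again an $\Fqm$-linear code and this notation is justified. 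Therefore each generator of the left-hand side belongs to the right-hand sum, which proves the inclusion.

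The only delicate point is the reindexing in the scalar identity: one must use that the Frobenius map has order dividing $m$ on $\Fqm$ (equivalently $d^{q^m}=d$) to see that shifting the exponent by $j$ merely permutes the $m$ conjugates, so the sum over $i$ collapses to $\Tr(d)$ times $\sum_j c^{q^j}=\Tr(c)$. I do not expect genuine difficulty beyond this bookkeeping. Note finally that the inclusion need not be an equality, since the right-hand side may contain individual traces $\Tr(\starp{\cv}{\dv^{q^i}})$ that do not arise as a full sum over $i$, which is exactly why the statement asserts only $\subseteq$.
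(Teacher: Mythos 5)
Your proof is correct and takes essentially the same approach as the paper's: both hinge on the scalar identity $\tr(c)\tr(d)=\sum_{i=0}^{m-1}\tr\left(c\,d^{q^i}\right)$, proved by the same reindexing of the double sum of conjugates modulo $m$, then lifted componentwise to vectors and extended to the codes by $\Fq$-linearity of the spans. No gap to report.
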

\begin{proof}
It is readily verified that $\DC^{q^i}$ is a linear code over $\Fqm$. Let $c,d \in \Fqm$. We  have
\begin{align*}
    \tr(c)\cdot \tr(d)&=\left(\sum_{0 \le i\le m-1} c^{q^i}\right)\cdot \left(\sum_{0 \le i\le m-1} d^{q^i}\right) \\
    &=\sum_{\begin{smallmatrix}0\le i \le m-1\\
   0\le j\le m-1\end{smallmatrix}}c^{q^i}\cdot d^{q^j} \\
   &=\sum_{0 \le j \le m-1} \sum_{0\le i\le m-1} c^{q^i}\cdot d^{q^{(i+j \mod m)}} \\
   &=\sum_{0 \le j \le m-1} \sum_{0\le i\le m-1} c^{q^i}\cdot d^{q^{i+j}} \\
   &=\sum_{0 \le i\le m-1} \tr(c \cdot d^{q^i}).
\end{align*}
Because the trace acts component-wise on vectors, we also have for $\cv,\dv\in \Fq^n$,
\[
\tr(\cv) \star \tr(\dv)=\sum_{0 \le i\le m-1} \tr(\cv \star \dv^{q^i}).\]
Hence
\begin{align*}
    \tr(\CC) \star \tr(\DC)&=\Fqspan{\tr(\cv) \star \tr(\dv) \mid \cv \in \CC, \dv \in \DC}\\
    &=\Fqspan{\sum_{i=0}^{m-1} \tr\left(\starp{\cv}{\dv^{q^i}}\right) \mid \cv \in \CC, \dv \in \DC}\\
    &\subseteq\sum_{i=0}^{m-1} \Fqspan{\tr\left(\starp{\cv}{\dv^{q^i}}\right) \mid \cv \in \CC, \dv \in \DC}\\
    &=\sum_{i=0}^{m-1} \tr\left(\CC \star \DC^{q^i}\right).
\end{align*}
\end{proof}

Note that for an $\Fqm$-linear code $\CC$, $\dim_{\Fq} \Tr(\CC) \leq  \min(m \cdot \dim_{\Fqm} \CC,n)$, where $n$ is the code length of $\CC$ and $\DC$, and equality generally holds.
An easy corollary of this proposition is that 
\begin{corollary}\label{cor:general}
Let $\CC$ and $\DC$ be two $\Fqm$-linear codes of a same length and which are such that 
$\dim_{\Fq} \Tr(\CC) = m \cdot \dim_{\Fqm} \CC$ and $\dim_{\Fq} \Tr(\DC) = m \cdot \dim_{\Fqm} \DC$. We have
$$
\dim_{\Fq} \left( \Tr(\CC) \star \Tr(\DC) \right) - \dim_{\Fq}  \Tr(\CC)\cdot \dim_{\Fq} \Tr(\DC) \leq m \cdot \left( \dim_{\Fqm}( \starp{\CC}{\DC}) - \dim_{\Fqm} \CC\cdot  \dim_{\Fqm}\DC \right)
$$
\end{corollary}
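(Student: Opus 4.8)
The plan is to bound the left-hand side by analyzing the dimension of $\Tr(\CC)\star\Tr(\DC)$ via the inclusion established in Proposition \ref{pr:general}, and then to compare the "defect from the generic dimension" on the trace side with the corresponding defect on the $\Fqm$-side. The guiding idea is that $\dim_{\Fqm}(\starp{\CC}{\DC}) - \dim_{\Fqm}\CC\cdot\dim_{\Fqm}\DC$ is (the negative of) the number of linear redundancies among the $\dim_{\Fqm}\CC\cdot\dim_{\Fqm}\DC$ natural generators $\starp{\cv_s}{\dv_t}$ of the product code. I want to show that each such redundancy over $\Fqm$ forces, through the trace, $m$ redundancies over $\Fq$ among the generators of $\Tr(\CC)\star\Tr(\DC)$, which is exactly what the factor $m$ in the statement records.

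First I would set $k=\dim_{\Fqm}\CC$ and $\ell=\dim_{\Fqm}\DC$ and use the hypothesis to rewrite the inequality purely in terms of products of dimensions: since $\dim_{\Fq}\Tr(\CC)=mk$ and $\dim_{\Fq}\Tr(\DC)=m\ell$, the claimed bound becomes
\[
\dim_{\Fq}\bigl(\Tr(\CC)\star\Tr(\DC)\bigr) \;\leq\; m\cdot\dim_{\Fqm}(\starp{\CC}{\DC}) \;+\; m(m-1)\,k\ell.
\]
So it suffices to produce an upper bound on $\dim_{\Fq}(\Tr(\CC)\star\Tr(\DC))$ of this shape. Next I would apply Proposition \ref{pr:general}, which gives $\Tr(\CC)\star\Tr(\DC)\subseteq\sum_{i=0}^{m-1}\Tr(\starp{\CC}{\DC^{q^i}})$. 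Taking dimensions and using subadditivity of dimension over a sum of subspaces, together with $\dim_{\Fq}\Tr(\EC)\leq m\dim_{\Fqm}\EC$ for any $\Fqm$-linear $\EC$, yields
\[
\dim_{\Fq}\bigl(\Tr(\CC)\star\Tr(\DC)\bigr)\;\leq\;\sum_{i=0}^{m-1} m\cdot\dim_{\Fqm}\bigl(\starp{\CC}{\DC^{q^i}}\bigr).
\]
The term $i=0$ contributes $m\cdot\dim_{\Fqm}(\starp{\CC}{\DC})$, which is exactly the first summand I am aiming for. It then remains to control the remaining $m-1$ terms by the crude bound $\dim_{\Fqm}(\starp{\CC}{\DC^{q^i}})\leq k\ell$, since $\starp{\CC}{\DC^{q^i}}$ is spanned by $k\ell$ products and $\dim_{\Fqm}\DC^{q^i}=\dim_{\Fqm}\DC=\ell$ (the Frobenius is an $\Fq$-linear field automorphism, hence preserves $\Fqm$-dimension). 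Summing the $m-1$ terms $i=1,\dots,m-1$ gives at most $m(m-1)k\ell$, and adding the $i=0$ term reproduces the desired right-hand side.

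The step I expect to require the most care is justifying that the two dimension inequalities used above — subadditivity across the sum in Proposition \ref{pr:general} and the bound $\dim_{\Fq}\Tr(\EC)\leq m\dim_{\Fqm}\EC$ applied termwise — can be chained without losing the sharp constant; in particular one must verify that the $i=0$ term is isolated \emph{before} applying the crude $k\ell$ bound to the others, rather than bounding all $m$ terms crudely (which would give the weaker $m^2k\ell$ and destroy the comparison with $\dim_{\Fqm}(\starp{\CC}{\DC})$). The only other subtlety is the routine but necessary observation that $\Tr(\DC^{q^i})=\Tr(\DC)$ and that $\DC^{q^i}$ has the same $\Fqm$-dimension as $\DC$; both follow from Frobenius being an automorphism of $\Fqm$ fixing $\Fq$ pointwise, and neither affects the counting. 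No genuine obstacle arises beyond this bookkeeping, since the heart of the argument is entirely contained in Proposition \ref{pr:general}.
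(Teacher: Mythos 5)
Your proposal is correct and follows essentially the same route as the paper's own proof: both apply Proposition \ref{pr:general}, isolate the $i=0$ term $m\cdot\dim_{\Fqm}(\starp{\CC}{\DC})$, bound each remaining term $\dim_{\Fq}\Tr\left(\starp{\CC}{\DC^{q^i}}\right)$ by $m\cdot\dim_{\Fqm}\CC\cdot\dim_{\Fqm}\DC$, and invoke the hypothesis $\dim_{\Fq}\Tr(\CC)\cdot\dim_{\Fq}\Tr(\DC)=m^2\dim_{\Fqm}\CC\cdot\dim_{\Fqm}\DC$ to conclude. The only cosmetic difference is that you rearrange the target inequality first and then verify it, whereas the paper runs the chain of inequalities and substitutes at the end.
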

\begin{proof}
We will drop in what follows the subscript indicating in the dimension if it  is taken by considering the corresponding code as an $\Fqm$ subspace or as an $\Fq$ subspace-- it will be clear from the context.
We have 
\begin{eqnarray*}
\dim \left( \Tr(\CC) \star \Tr(\DC) \right) & \leq & \sum_{i=0}^{m-1} \dim \Tr \left( \starp{\CC}{\DC^{q^i}}\right) \;\;\text{(by Prop. \ref{pr:general})}\\
& \leq & m \cdot \dim \left( \starp{\CC}{\DC}\right) +  \sum_{i=1}^{m-1} m \cdot \dim \left( \starp{\CC}{\DC^{q^i}}\right)\\
& \leq & m  \left( \dim( \starp{\CC}{\DC}) - \dim(\CC) \dim(\DC)\right) + m \cdot \dim \CC\cdot \dim \DC + m  \sum_{i=1}^{m-1}  \dim \CC \cdot \dim \left(\DC^{q^i}\right) \\
& \leq & m \left( \dim( \starp{\CC}{\DC}) - \dim \CC \cdot \dim \DC \right) + m^2 \dim \CC \cdot \dim \DC\\
& \leq & m \left( \dim( \starp{\CC}{\DC}) - \dim \CC \cdot \dim \DC \right) + \dim \Tr(\CC)\cdot \dim  \Tr(\DC).
\end{eqnarray*}
\end{proof}
\begin{remark}
In particular, this result implies that if we have two codes $\CC$ and $\DC$ over $\Fqm$ for which 
$\dim_{\Fqm} (\starp{\CC}{ \DC })< \dim_{\Fqm}  \CC  \cdot  \dim_{\Fqm} \DC$, then the same property survives for the corresponding trace codes:
$$
\dim_{\Fq} (\starp{\Tr(\CC)}{ \Tr(\DC) })< \dim_{\Fq}\Tr( \CC) \cdot  \dim_{\Fq} \Tr(\DC).
$$
\end{remark}

In the case $\CC=\DC$, namely if we consider square codes, Proposition \ref{pr:general} can be refined to give
\begin{proposition}\label{pr:square_general}
Let $\CC$  be a linear code over $\Fqm$. We have
\begin{eqnarray}
\tr\left(\starp{\CC}{\CC^{q^u}}\right)&=&\tr\left(\starp{\CC}{\CC^{q^{m-u}}}\right)\label{eq:symmetry}\\
\sqb{\Tr(\CC)} &\subseteq &\sum_{u=0}^{\lfloor m/2 \rfloor}\Tr\left(\starp{\CC}{\CC^{q^u}}\right) \label{eq:inclusion_general}\\
\dim_{\Fq} \left(\Tr\left(\starp{\CC}{\CC^{q^{m/2}}}\right)\right)& \leq &m \frac{(\dim_{\Fqm} (\CC))^2}{2} \;\; \text{if $m$ is even}
\label{eq:dimension}
\end{eqnarray}
\end{proposition}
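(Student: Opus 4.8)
The plan is to establish the three assertions in sequence, since \eqref{eq:inclusion_general} rests on \eqref{eq:symmetry} and the genuine difficulty lies in \eqref{eq:dimension}. The guiding observation throughout is that the trace is invariant under the Frobenius map: because raising the summation index in $\sum_{i=0}^{m-1}z^{q^i}$ by one merely permutes the terms (using $z^{q^m}=z$), we have $\tr(z^q)=\tr(z)$, hence $\tr(z^{q^j})=\tr(z)$ for every $j$, and the same holds componentwise on vectors.

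For the symmetry \eqref{eq:symmetry} I would take a generator $\starp{\cv}{\dv^{q^u}}$ of $\starp{\CC}{\CC^{q^u}}$, with $\cv,\dv\in\CC$, and raise it to the power $q^{m-u}$. Since the $q^{m-u}$-th power commutes with $\star$ and $\dv^{q^u q^{m-u}}=\dv^{q^m}=\dv$, this yields $\starp{\cv^{q^{m-u}}}{\dv}=\starp{\dv}{\cv^{q^{m-u}}}$, which is a generator of $\starp{\CC}{\CC^{q^{m-u}}}$. Because $\tr$ annihilates the $q^{m-u}$-th power, $\tr(\starp{\cv}{\dv^{q^u}})=\tr(\starp{\dv}{\cv^{q^{m-u}}})$, so $\Tr(\starp{\CC}{\CC^{q^u}})\subseteq\Tr(\starp{\CC}{\CC^{q^{m-u}}})$, and exchanging $u$ with $m-u$ gives the reverse inclusion. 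Then \eqref{eq:inclusion_general} follows by specializing Proposition \ref{pr:general} to $\DC=\CC$, which gives $\sqb{\Tr(\CC)}\subseteq\sum_{i=0}^{m-1}\Tr(\starp{\CC}{\CC^{q^i}})$, and using \eqref{eq:symmetry} to identify the summand indexed by $i$ with the one indexed by $m-i$; the indices $\floor{m/2}+1,\dots,m-1$ thus contribute nothing new and the sum collapses to $\sum_{u=0}^{\floor{m/2}}\Tr(\starp{\CC}{\CC^{q^u}})$.

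The main obstacle is \eqref{eq:dimension}, because the naive estimate $\dim_{\Fq}\Tr(\SC)\le m\dim_{\Fqm}\SC$ applied to $\SC\eqdef\starp{\CC}{\CC^{q^{m/2}}}$ gives only $mk^2$ with $k\eqdef\dim_{\Fqm}\CC$, which is twice too large; the saved factor $\tfrac12$ must come from a self-symmetry special to $u=m/2$. The key step I would establish is that $\SC$ is stable under the Frobenius $z\mapsto z^{q^{m/2}}$: raising the generator $\starp{\cv}{\dv^{q^{m/2}}}$ to the power $q^{m/2}$ produces $\starp{\cv^{q^{m/2}}}{\dv}=\starp{\dv}{\cv^{q^{m/2}}}\in\SC$, whence $\SC^{q^{m/2}}\subseteq\SC$, and since this power map preserves the $\Fqm$-dimension we get $\SC^{q^{m/2}}=\SC$. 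I would then factor the trace through the intermediate field by transitivity, $\tr_{\Fqm/\Fq}=\tr_{\mathbb{F}_{q^{m/2}}/\Fq}\circ\tr_{\Fqm/\mathbb{F}_{q^{m/2}}}$, and set $T\eqdef\tr_{\Fqm/\mathbb{F}_{q^{m/2}}}(\SC)$, an $\mathbb{F}_{q^{m/2}}$-linear code in $\mathbb{F}_{q^{m/2}}^n$.

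The crux is the descent $T\subseteq\SC\cap\mathbb{F}_{q^{m/2}}^n$: each element $\word{s}+\word{s}^{q^{m/2}}$ of $T$ (with $\word{s}\in\SC$) is fixed by the Frobenius $q^{m/2}$, hence lies in $\mathbb{F}_{q^{m/2}}^n$, and also lies in $\SC$ since $\word{s}^{q^{m/2}}\in\SC^{q^{m/2}}=\SC$. Using the elementary fact that $\mathbb{F}_{q^{m/2}}$-independent vectors remain $\Fqm$-independent, one gets $\dim_{\mathbb{F}_{q^{m/2}}}T\le\dim_{\mathbb{F}_{q^{m/2}}}(\SC\cap\mathbb{F}_{q^{m/2}}^n)\le\dim_{\Fqm}\SC\le k^2$. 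Finally, the inner-field trace bound $\dim_{\Fq}\tr_{\mathbb{F}_{q^{m/2}}/\Fq}(T)\le\tfrac{m}{2}\dim_{\mathbb{F}_{q^{m/2}}}T$ gives $\dim_{\Fq}\Tr(\SC)\le\tfrac{m}{2}k^2$, as required. The delicate point I expect to be careful about is exactly this descent step, since it is where the self-symmetry of $\SC$ is exploited and where the factor of two is recovered; everything else is routine manipulation of the trace and of component-wise products.
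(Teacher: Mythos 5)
Your proofs of \eqref{eq:symmetry} and \eqref{eq:inclusion_general} are exactly the paper's: raise a generator to the power $q^{m-u}$, use the Frobenius-invariance of the trace to get one inclusion, swap $u$ and $m-u$ for the other, and then collapse the sum coming from Proposition \ref{pr:general} via this symmetry. For \eqref{eq:dimension}, however, you take a genuinely different route. The paper argues by explicit generator counting: with a normal basis $\{\beta,\beta^q,\dots,\beta^{q^{m-1}}\}$ and a basis $\cv_1,\dots,\cv_r$ of $\CC$, the code $\Tr\left(\starp{\CC}{\CC^{q^{m/2}}}\right)$ is spanned by the $mr^2$ elements $\tr\left(\beta^{q^\ell}\starp{\cv_i}{\cv_j^{q^{m/2}}}\right)$, and the identity $\tr\left(\beta^{q^\ell}\starp{\cv_i}{\cv_j^{q^{m/2}}}\right)=\tr\left(\beta^{q^{\ell+m/2}}\starp{\cv_j}{\cv_i^{q^{m/2}}}\right)$ shows the generators with $\ell\ge m/2$ repeat those with $\ell<m/2$ (with $i,j$ swapped), leaving $\frac{m}{2}r^2$ of them. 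You instead exploit the same involution structurally: $\SC\eqdef\starp{\CC}{\CC^{q^{m/2}}}$ is stable under $z\mapsto z^{q^{m/2}}$, hence the relative trace $T\eqdef\tr_{\Fqm/\mathbb{F}_{q^{m/2}}}(\SC)$ lies in the subfield subcode $\SC\cap\mathbb{F}_{q^{m/2}}^n$, whose $\mathbb{F}_{q^{m/2}}$-dimension is at most $\dim_{\Fqm}\SC\le(\dim_{\Fqm}\CC)^2$; transitivity of the trace and the usual trace-code bound over the degree-$m/2$ extension $\mathbb{F}_{q^{m/2}}/\Fq$ then produce the factor $\frac{m}{2}$. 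Every step you invoke (semilinearity of Frobenius, transitivity of the trace, preservation of linear independence under field extension, the trace-code dimension bound) is standard and correctly applied, so your proof is sound. As for what each approach buys: the paper's argument is more elementary and hands you an explicit spanning set of size $\frac{m}{2}r^2$, which can be useful for computations; yours isolates the conceptual reason for the halving, namely descent to the fixed field of the involution, and yields the slightly stronger structural fact that $\Tr(\SC)=\tr_{\mathbb{F}_{q^{m/2}}/\Fq}(T)$ with $T\subseteq\SC\cap\mathbb{F}_{q^{m/2}}^n$, i.e.\ the trace of $\SC$ factors through a subfield subcode of $\SC$, without ever invoking the existence of a normal basis.
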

\begin{proof}
{\em Proof of \eqref{eq:symmetry}.}
Let $\cv,\dv\in \CC$. Since the trace acts component-wise on vectors and $\tr(x)=\tr\left(x^{q^u}\right)$ for any $x\in \Fqm$ and natural number $u$, 
  \[
  \tr\left(\starp{\cv}{\dv^{q^u}}\right)=\tr\left((\starp{\cv}{\dv^{q^u}})^{q^{m-u}}\right)=\tr\left(\starp{\cv^{q^{m-u}}}{\dv^{q^m}}\right)=\tr\left(\starp{\dv}{\cv^{q^{m-u}}}\right)\in\tr\left(\starp{\CC}{\CC^{(q^{m-u})}}\right).
  \]
  This shows that $\tr\left(\starp{\CC}{\CC^{q^u}}\right)\subseteq\tr\left(\starp{\CC}{\CC^{q^{m-u}}}\right)$. By replacing $u$ by $m-u$ in the equality above, we obtain the reverse inclusion, which finishes the proof of \eqref{eq:symmetry}.
  
\noindent
{\em Proof of \eqref{eq:inclusion_general}.}  
\begin{eqnarray*}
\sqb{\Tr(\CC)} &\subseteq &\sum_{u=0}^{ m-1}\Tr\left(\starp{\CC}{\CC^{q^u}}\right) \;\;\text{(by Prop. \ref{pr:general} )}\\
& \subseteq & \sum_{u=0}^{\lfloor m/2 \rfloor}\Tr\left(\starp{\CC}{\CC^{q^u}}\right)\;\;\text{(by \eqref{eq:symmetry}).}
\end{eqnarray*}

\noindent
{\em Proof of \eqref{eq:dimension}.}
Let $\{\cv_1,\cdots,\cv_r\}$ be a basis of $\CC$ where $r \eqdef \dim (\CC)$. $\Tr\left(\starp{\CC}{\CC^{q^{m/2}}}\right)$ is generated by the $ \tr\left(\beta^{q^\ell} \starp{\cv_i}{\cv_j^{q^{m/2}}}\right)$ where $\{\beta, \beta^q,\cdots,\beta^{q^{m-1}}\}$ is a normal basis of $\Fqm$ and $\ell$ ranges over $\{0,\cdots, m-1\}$ and $i$, $j$ over $\{1,\cdots,r\}$. Since for any $0\le \ell\le\frac{m}{2}-1$, $1\le i,j \le r$,
\[
    \tr\left(\beta^{q^\ell} \starp{\cv_i}{\cv_j^{q^{m/2}}}\right) =\tr\left(\beta^{q^{\ell+\frac{m}{2}}} \starp{\cv_i^{q^{m/2}}}{\cv_j}\right)=\tr\left(\beta^{q^{\ell+\frac{m}{2}}} \starp{\cv_j}{\cv_i^{q^{m/2}}}\right),
\]
this implies that $ \tr\left(\CC*\CC^{q^\frac{m}{2}}\right)$ is generated by the (smaller) set $ \tr\left(\beta^{q^\ell} \starp{\cv_i}{\cv_j^{q^{m/2}}}\right)$ where $\ell$ ranges over $\{0,\cdots m/2-1\}$ and $i$, $j$ over $\{1,\cdots,r\}$. 
This is a set of cardinality $\frac{mr ^2}{2}$.
  \end{proof}
  
  Proposition \ref{pr:square_general} has a corollary which is similar to Corollary \ref{cor:general}, namely that
\begin{corollary}\label{cor:square_general}
Let $\CC$  be an $\Fqm$-linear code.
We have
\begin{equation}\label{eq:cor_dimension}
\dim_{\Fq} \sqb{\Tr(\CC)} \leq  m  \cdot \dim_{\Fqm}\sq{\CC} + \binom{m}{2} \left( \dim_{\Fqm} \CC \right)^2.
\end{equation} 
Furthermore if  
$\dim_{\Fq} \Tr(\CC) = m \cdot \dim_{\Fqm} \CC$ then
\begin{equation}\label{eq:cor_difference}
\dim_{\Fq} \sqb{\Tr(\CC)}  -\binom{\dim_{\Fq}  \Tr (\CC) +1}{2} \leq m \left[ \dim_{\Fqm}\sq{\CC} - \binom{\dim_{\Fqm} \CC +1}{2}\right].
\end{equation}
\end{corollary}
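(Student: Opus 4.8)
The plan is to feed the inclusion \eqref{eq:inclusion_general} into a term-by-term dimension count, and then to reduce the second inequality to an exact binomial identity. Write $r\eqdef\dim_{\Fqm}\CC$. Starting from Proposition \ref{pr:square_general}, I would pass to dimensions to get
\[
\dim_{\Fq}\sqb{\Tr(\CC)}\leq\sum_{u=0}^{\floor{m/2}}\dim_{\Fq}\Tr\left(\starp{\CC}{\CC^{q^u}}\right),
\]
and then bound each summand separately. The $u=0$ term is $\dim_{\Fq}\Tr(\sq{\CC})\leq m\dim_{\Fqm}\sq{\CC}$. For an index $u$ with $1\leq u<m/2$, the code $\starp{\CC}{\CC^{q^u}}$ is spanned by at most $r^2$ products, so $\dim_{\Fqm}(\starp{\CC}{\CC^{q^u}})\leq r^2$ and hence $\dim_{\Fq}\Tr(\starp{\CC}{\CC^{q^u}})\leq mr^2$. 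Finally, when $m$ is even, the self-paired central term $u=m/2$ is controlled by the sharper estimate \eqref{eq:dimension}, namely $\dim_{\Fq}\Tr(\starp{\CC}{\CC^{q^{m/2}}})\leq mr^2/2$.

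I would then sum these contributions, distinguishing the parity of $m$. If $m$ is odd there are $(m-1)/2$ terms with index $1\leq u\leq(m-1)/2$, contributing at most $\frac{m-1}{2}\cdot mr^2=\binom{m}{2}r^2$; if $m$ is even there are $m/2-1$ terms contributing $mr^2$ each together with the central term contributing $mr^2/2$, for a total of $\left(\frac{m}{2}-1+\frac12\right)mr^2=\binom{m}{2}r^2$ as well. Adding the $u=0$ contribution in either case yields \eqref{eq:cor_dimension}. I want to stress that the halving in \eqref{eq:dimension} is exactly what makes the constant come out right: replacing it by the naive bound $mr^2$ on the $u=m/2$ term would inflate the estimate to $\tfrac{m^2}{2}r^2$ instead of $\binom{m}{2}r^2$. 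So the symmetry \eqref{eq:symmetry}, which lets me fold the indices $u$ and $m-u$ together and thereby halve the range of summation, together with the refined dimension bound \eqref{eq:dimension} for the central term, is the genuine crux of the first part.

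Finally, for \eqref{eq:cor_difference} I would invoke the hypothesis $\dim_{\Fq}\Tr(\CC)=mr$, which lets me replace $\binom{\dim_{\Fq}\Tr(\CC)+1}{2}$ by $\binom{mr+1}{2}$, and then substitute the bound \eqref{eq:cor_dimension} just proved. It only remains to verify the elementary identity
\[
\binom{m}{2}r^2+m\binom{r+1}{2}=\binom{mr+1}{2},
\]
which holds because both sides equal $\tfrac{mr(mr+1)}{2}$. Rearranging this equality converts \eqref{eq:cor_dimension} verbatim into \eqref{eq:cor_difference}, so no further estimation is required. I do not expect any serious obstacle here: the only care needed is the parity bookkeeping in the term count and the observation that the binomial relation is an exact equality, which is precisely what makes the second inequality a formal consequence of the first.
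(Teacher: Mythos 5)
Your proof is correct and follows essentially the same route as the paper: both split the sum from \eqref{eq:inclusion_general} by parity of $m$, bound the $u=0$ term by $m\cdot\dim_{\Fqm}\sq{\CC}$, use the trivial bound $mr^2$ on intermediate terms and the refined bound \eqref{eq:dimension} on the central term when $m$ is even, and then derive \eqref{eq:cor_difference} by substituting \eqref{eq:cor_dimension} into binomial algebra. Your phrasing of the last step as the exact identity $\binom{m}{2}r^2+m\binom{r+1}{2}=\binom{mr+1}{2}$ is a slightly cleaner packaging of the chain of manipulations the paper writes out, but it is the same argument.
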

\begin{proof}
{\em Proof of \eqref{eq:cor_dimension}:}
by using \eqref{eq:inclusion_general} of Proposition \ref{pr:square_general}, we obtain
\begin{equation}
\label{eq:begin}
\dim \sqb{\Tr(\CC)}  \leq  \sum_{i=0}^{\lfloor \frac{m}{2} \rfloor} \dim \Tr \left( \starp{\CC}{\CC^{q^i}}\right).
\end{equation}
In the case of odd $m$, we deduce that 
\begin{eqnarray*}
\dim \sqb{\Tr(\CC)} & \leq & m \cdot \dim \sq{\CC} +  \sum_{i=1}^{\lfloor \frac{m}{2} \rfloor} m \cdot \dim \starp{\CC}{\CC^{q^i}} \\
& \leq & m  \cdot \dim \sq{\CC}  + m  \sum_{i=1}^{\lfloor \frac{m}{2} \rfloor}  \dim \CC \cdot \dim \CC^{q^i} \\
& \leq & m  \cdot \dim \sq{\CC}  + \frac{m(m-1)}{2} \left( \dim  \CC \right)^2 \\
& \leq & m  \cdot \dim \sq{\CC}  + \binom{m}{2} \left( \dim \CC \right)^2.
\end{eqnarray*}
On the other hand,  if $m$ is even, we have
\begin{eqnarray*}
\dim  \sqb{\Tr(\CC)} & \leq & m   \cdot \dim \sq{\CC}  +   \sum_{i=1}^{ \frac{m}{2} -1} m \cdot  \dim \CC \cdot \dim \CC^{q^i} +   \dim \Tr \left( \starp{\CC}{\CC^{q^{m/2}}}\right)\\
& \leq & m   \cdot \dim \sq{\CC} + \frac{m(m-2)}{2} \left( \dim \CC \right)^2 + \frac{m (\dim \CC)^2}{2} \;\;\text{(by using \eqref{eq:dimension} of Proposition \ref{pr:square_general})}\\
& \leq & m  \cdot \dim \sq{\CC}  + \binom{m}{2} \left( \dim \CC \right)^2.
\end{eqnarray*}
\noindent
{\em Proof of \eqref{eq:cor_difference}:}
\begin{eqnarray*}
\dim  \sqb{\Tr(\CC)}-  \binom{\dim  \Tr (\CC) +1}{2}& \leq & m  \cdot \dim \sq{\CC}  + \binom{m}{2} \left( \dim \CC \right)^2 - \binom{\dim  \Tr (\CC) +1}{2} \;\;\text{(by using \eqref{eq:cor_dimension})}\\
& \leq & m  \cdot \dim \sq{\CC}  + \frac{m(m-1)}{2} \left( \dim \CC \right)^2 -
\binom{m \dim \CC +1}{2}\\
& \leq & m  \cdot \dim \sq{\CC} + m \cdot \dim \CC \cdot \left[ \frac{m-1}{2} \dim \CC -\frac{m \dim \CC + 1}{2}\right]\\
& \leq & m  \cdot \dim \sq{\CC} - m \cdot \dim \CC \cdot \frac{\dim \CC + 1}{2}\\
& \leq & m \left[ \dim \sq{\CC} - \binom{\dim \CC +1}{2}\right].
\end{eqnarray*}
\end{proof}

Similarly to Corollary \ref{cor:general}, Corollary \ref{cor:square_general} implies that if the dimension of a square code $\sq{\CC}$ over $\Fqm$ is smaller than what we expect from a random code, namely that
$ \dim \left(\sq{\CC}\right) < \binom{\dim \CC +1}{2}$ (if $ \binom{\dim \CC +1}{2}$ is smaller than the code length) then this property survives for the trace code:
$$
\dim \sqb{\Tr(\CC)} < \binom{\dim  \Tr (\CC) +1}{2}.
$$


\section{Alternant case with $e_{\AC}=0$ and Goppa case with $\eg=0$} \label{section: e=0}
In this section, we are going to give a first upper bound on the dimension of the square of the dual of an alternant or Goppa code which is valid for all parameters and is tight when $e_{\AC}=0$ for random alternant codes and when $r< q-1$ for Goppa codes.
This will a direct application of the general results of Section \ref{sec:general} from which we derive:
\begin{thm}\label{thm:alternant_e=0}
Let $\Alt{r}{\xv}{\yv}$ be an alternant code over $\Fq$. Then
\begin{equation}\label{eq: dim_e=0} \dim_{\Fq} \sqb{\Alt{r}{\xv}{\yv}^\perp} \le \binom{rm+1}{2}-\frac{m}{2}(r-1)(r-2). 
\end{equation}
\end{thm}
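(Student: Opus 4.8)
The plan is to deduce the bound directly from the general trace-code machinery of Section \ref{sec:general}, exploiting the fact that the dual of an alternant code is literally the trace of a GRS code. First I would invoke \eqref{eq: dual_alt} to write $\Alt{r}{\xv}{\yv}^\perp = \Tr(\GRS{r}{\xv}{\yv})$, and set $\CC \eqdef \GRS{r}{\xv}{\yv}$, an $\Fqm$-linear code of dimension $\dim_{\Fqm}\CC = r$. The target quantity $\dim_{\Fq}\sqb{\Alt{r}{\xv}{\yv}^\perp}$ is then exactly $\dim_{\Fq}\sqb{\Tr(\CC)}$, which is the left-hand side of the unconditional inequality \eqref{eq:cor_dimension} in Corollary \ref{cor:square_general}.

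Applying \eqref{eq:cor_dimension} gives
$$\dim_{\Fq}\sqb{\Alt{r}{\xv}{\yv}^\perp} \le m\cdot\dim_{\Fqm}\sq{\CC} + \binom{m}{2}\left(\dim_{\Fqm}\CC\right)^2.$$
The next step is to control the two $\Fqm$-dimensions on the right. Here the peculiar structure of GRS codes enters: by Proposition \ref{pr: square_GRS}, $\sq{\CC} = \GRS{2r-1}{\xv}{\yv^2}$, so its dimension is $\min(n, 2r-1)$ and in particular $\dim_{\Fqm}\sq{\CC} \le 2r-1$. This upper bound holds for all parameters, even when the square fills the ambient space, which is what makes the statement valid for every $r$. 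Substituting $\dim_{\Fqm}\sq{\CC} \le 2r-1$ and $\dim_{\Fqm}\CC = r$ yields
$$\dim_{\Fq}\sqb{\Alt{r}{\xv}{\yv}^\perp} \le m(2r-1) + \binom{m}{2}r^2.$$

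Finally I would check that this closed form coincides with the one claimed in the theorem, i.e. that $m(2r-1) + \binom{m}{2}r^2 = \binom{rm+1}{2} - \tfrac{m}{2}(r-1)(r-2)$; expanding both sides reduces this to the polynomial identity $\tfrac{m(m-1)}{2}r^2 + 2mr - m = \tfrac{m^2-m}{2}r^2 + 2mr - m$, which is immediate. There is no genuine obstacle: the whole argument is essentially a one-line application of Corollary \ref{cor:square_general} combined with the GRS square dimension, and the only point requiring a moment of care is to invoke the \emph{inequality} $\dim_{\Fqm}\sq{\CC} \le 2r-1$ rather than the equality, so that the bound remains correct in the regime $2r-1 > n$ where the GRS square degenerates to the full space. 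The substance of the result is entirely carried by Proposition \ref{pr:square_general} and Corollary \ref{cor:square_general}; this theorem is the first concrete payoff of that machinery.
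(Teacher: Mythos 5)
Your proof is correct and takes essentially the same route as the paper's own: both set $\CC \eqdef \GRS{r}{\xv}{\yv}$, apply \eqref{eq:cor_dimension} of Corollary \ref{cor:square_general}, plug in Proposition \ref{pr: square_GRS} for $\dim_{\Fqm}\sq{\CC}$, and close with the same algebraic identity. Your added care in invoking the inequality $\dim_{\Fqm}\sq{\CC}\le 2r-1$ rather than the equality (so the bound survives when $2r-1>n$) is a valid minor refinement of a point the paper passes over silently.
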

\begin{remark}
Notice that the upper bound appearing in this theorem coincides exactly with the prediction of the dimension 
of the square code derived from the predictions given in \cite{FGOPT13} used together with \cite[Proposition~1]{MP12} 
in the alternant case corresponding to $\ea=0$, see \eqref{eq:prediction_alternant} or the Goppa case with $r<q-1$), see \eqref{eq:prediction_Goppa_e=0}. In other words, this theorem shows that this prediction is actually an upper bound on the dimension and the experimental evidence gathered in \cite{FGOPT13} actually shows that the dimensions of random alternant or Goppa codes agree with this upper bound in such a case.
\end{remark}

\begin{proof}
We let $\CC \eqdef \GRS{r}{\xv}{\yv}$. Note that $\Alt{r}{\xv}{\yv}^\perp = \tr (\CC)$. We apply Corollary \ref{cor:square_general} with such a $\CC$ and get that
\begin{eqnarray*}
\dim_{\Fq} \sqb{\tr(\CC)} & \leq & m \cdot \dim_{\Fqm} \sq{\CC} +\binom{m}{2}  \left(\dim_{\Fqm} \CC \right)^2 \\
& = & m (2r-1) + \frac{m(m-1)r^2}{2} \;\;\text{(by Proposition \ref{pr: square_GRS})}\\
& = & \left( 2(2r-1)+(m-1)r^2 \right) \frac{m}{2}\\
& = &  \left( r(mr+1)-(r-1)(r-2) \right) \frac{m}{2}\\
& = & \binom{rm+1}{2}-\frac{m}{2}(r-1)(r-2).
\end{eqnarray*}

\end{proof}
 
\section{Alternant case with $e_{\AC}>0$} \label{section: alternant_e>0}
In this section, we will show new linear relationships arising for alternant codes (hence also for Goppa codes) of high enough order $r$. More precisely, the threshold value for which new relations are guaranteed is $r\ge q+1$, i.e. $e_{\AC}>0$. 
Our main result in this section is that
\begin{thm}
\label{thm:alternant_e>0}
Let $\Alt{r}{\xv}{\yv}$ be an alternant code over $\Fq$. Then
\begin{equation}\label{eq:alternant_e>0} 
\dim_{\Fq} \sq{(\Alt{r}{\xv}{\yv}^\perp)} \le \binom{rm+1}{2}-\frac{m}{2}(r-1)\left((2\ea+1)r-2\frac{q^{\ea+1}-1}{q-1}\right).
\end{equation}
\end{thm}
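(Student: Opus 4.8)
\textbf{The plan is to} refine the argument of Theorem~\ref{thm:alternant_e=0} by extracting additional linear relations among the generators $\tr\left(\beta^{q^\ell}\starp{\cv_i}{\cv_j^{q^u}}\right)$ of the codes $\tr\left(\starp{\CC}{\CC^{q^u}}\right)$ appearing in the inclusion \eqref{eq:inclusion_general}, where $\CC=\GRS{r}{\xv}{\yv}$. In Theorem~\ref{thm:alternant_e=0} we simply bounded $\dim_{\Fq}\tr\left(\starp{\CC}{\CC^{q^u}}\right)$ by $m\cdot\dim_{\Fqm}\left(\starp{\CC}{\CC^{q^u}}\right)\leq m\cdot r^2$ for $u\geq 1$. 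The key new observation is that when $u\geq 1$ is \emph{small} (more precisely $q^u \leq r-1$, which is exactly the condition defining $\ea$, since $\ea = \max\{i : r \geq q^i+1\}$), the code $\starp{\CC}{\CC^{q^u}}$ is itself a GRS code of dimension strictly less than $r^2$, so its trace has abnormally small dimension. \textbf{First I would} compute this dimension explicitly: since $\CC=\Fqmspan{\xv^a\yv \mid 0\leq a<r}$, we have $\CC^{q^u}=\Fqmspan{\xv^{aq^u}\yv^{q^u}\mid 0\leq a<r}$, and therefore
\begin{align*}
\starp{\CC}{\CC^{q^u}}&=\Fqmspan{\starp{(\xv^a\yv)}{(\xv^{bq^u}\yv^{q^u})}\mid 0\leq a,b<r}\\
&=\Fqmspan{\xv^{a+bq^u}\,\yv^{1+q^u}\mid 0\leq a,b<r}.
\end{align*}
This is a GRS code with multiplier $\yv^{1+q^u}$ and monomial support $\{\xv^c\}$ for $c$ ranging over the set $S_u\eqdef\{a+bq^u\mid 0\leq a,b<r\}$, so its dimension equals $|S_u|$ (provided $|S_u|\leq n$). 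When $q^u\leq r-1$ the two ranges overlap and $S_u$ is a genuine subset of $\{0,1,\dots,(r-1)(1+q^u)\}$ of size strictly smaller than $r^2$; a direct count gives $|S_u|=(r-1)(q^u+1)+1-\max(0,\,r-q^u)\cdot(\text{overlap correction})$, which I would evaluate carefully to obtain the precise contribution.

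\textbf{The main work} is then to plug these refined dimensions into \eqref{eq:inclusion_general} and sum over $u$. Writing $\ea$ for the largest index with $q^{\ea}\leq r-1$, I would split the sum over $0\leq u\leq\lfloor m/2\rfloor$ into the small range $1\leq u\leq \ea$, where $\starp{\CC}{\CC^{q^u}}$ is a proper GRS code and contributes $m\cdot|S_u|$ rather than the full $m\cdot r^2$, and the large range $u>\ea$, where $q^u>r-1$ forces $S_u$ to have full size $r^2$ (the monomials $a+bq^u$ are then all distinct) and we fall back on the crude bound. The $u=0$ term contributes $m(2r-1)$ by Proposition~\ref{pr: square_GRS} as before. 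Collecting terms, the total savings relative to the naive bound $\binom{rm+1}{2}+\text{(error)}$ come from the deficits $r^2-|S_u|$ accumulated over $1\leq u\leq\ea$, and the arithmetic should telescope into the claimed quantity $\frac{m}{2}(r-1)\left((2\ea+1)r-2\frac{q^{\ea+1}-1}{q-1}\right)$; I note that $\sum_{u=1}^{\ea}(q^u+1)=\ea+\frac{q^{\ea+1}-q}{q-1}=\frac{q^{\ea+1}-1}{q-1}+\ea-1$, so the geometric-series term $\frac{q^{\ea+1}-1}{q-1}$ in the statement is a strong hint that the deficit $r^2-|S_u|$ is governed precisely by $(r-1)(r-1-q^u)$-type expressions summed against this series.

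\textbf{The hard part will be} the exact counting of $|S_u|=|\{a+bq^u\mid 0\leq a,b<r\}|$ and the bookkeeping of overlaps, because collisions $a+bq^u=a'+b'q^u$ occur precisely when $a-a'=(b'-b)q^u$, and for $q^u\leq r-1$ these force nontrivial identifications whose count must be tracked as a function of both $r$ and $q^u$. A subtlety is that the generators of the trace code $\tr\left(\starp{\CC}{\CC^{q^u}}\right)$ for $u\geq 1$ need not be independent across different $u$, so strictly speaking the inclusion \eqref{eq:inclusion_general} only yields that the \emph{sum} of the subspaces contains the square; I would bound its dimension by the sum of the individual dimensions, which is legitimate and gives the stated upper bound, but one must verify that no term is inadvertently double-counted in a way that would invalidate the inequality direction (it does not, since we only ever use $\dim(\sum_u V_u)\leq\sum_u\dim V_u$). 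Once the count $|S_u|$ is pinned down and the resulting sum is simplified using the identity above, the bound \eqref{eq:alternant_e>0} should follow by the same algebraic manipulation that closed Theorem~\ref{thm:alternant_e=0}.
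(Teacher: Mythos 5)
Your proposal is, in substance, the paper's own proof: the same starting inclusion \eqref{eq:inclusion_general}, the same identification of $\starp{\CC}{\CC^{q^u}}$ as the span of the monomials $\xv^{a+bq^u}\yv^{1+q^u}$, the same dimension count in the small range, and the same split of the sum at $u=\ea$ (the paper keeps the splitting point $e$ free and optimizes at the end, finding $e=\ea$; that difference is cosmetic). Moreover, the counting you defer as ``the hard part'' is easy, and is exactly the content of the paper's Lemma \ref{lem:alternant}: when $q^u\le r-1$, \emph{every} integer $c\in\{0,\dots,(r-1)(q^u+1)\}$ is of the form $a+bq^u$ with $0\le a,b\le r-1$ (the interval $[(c-r+1)/q^u,\,c/q^u]$ has length $(r-1)/q^u\ge 1$, and its intersection with $[0,r-1]$ still contains an integer), so $S_u$ is the whole interval and $|S_u|=(r-1)(q^u+1)+1$ exactly, with no overlap correction. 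Also note that for an upper bound you only ever need $\dim_{\Fqm}\starp{\CC}{\CC^{q^u}}\le|S_u|$, so the caveats about $n$ and about whether the span is literally a GRS code are immaterial.

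There is, however, one genuine gap: your treatment of the large range does not reach the stated constant when $m$ is even. The sum in \eqref{eq:inclusion_general} runs up to $u=\floor{m/2}$, so applying the crude bound $mr^2$ to each $u\in\{\ea+1,\dots,\floor{m/2}\}$ gives, for even $m$, a contribution $\left(\frac{m}{2}-\ea\right)mr^2$, whereas reaching \eqref{eq:alternant_e>0} requires $\left(\frac{m-1}{2}-\ea\right)mr^2$; your final bound would thus be weaker than the theorem by exactly $\frac{mr^2}{2}$. The missing ingredient is \eqref{eq:dimension} of Proposition \ref{pr:square_general}: for even $m$ the boundary term satisfies $\dim_{\Fq}\tr\left(\starp{\CC}{\CC^{q^{m/2}}}\right)\le\frac{mr^2}{2}$, because the symmetry $\tr\left(\beta^{q^\ell}\starp{\cv_i}{\cv_j^{q^{m/2}}}\right)=\tr\left(\beta^{q^{\ell+m/2}}\starp{\cv_j}{\cv_i^{q^{m/2}}}\right)$ cuts the generating set in half. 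With that extra saving the large-range contribution becomes $\left(\frac{m-1}{2}-\ea\right)mr^2$ for both parities, and your arithmetic then indeed telescopes to \eqref{eq:alternant_e>0}; for odd $m$ your outline is already complete once the counting above is filled in.
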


\begin{remark}
Note that the upper bound on the dimension coincides with the prediction \eqref{eq:prediction_alternant} given for generic alternant codes. In other words, this theorem shows that this prediction is actually an upper bound on the dimension and the experimental evidence gathered in \cite{FGOPT13} actually shows that the dimensions of random alternant codes agree with this upper bound in such a case.
\end{remark}

The reason that we have a refinement of the upper bound of Theorem \ref{thm:alternant_e=0} for values of $r$ for which $\ea>0$ comes from the fact that  when we apply 
Proposition \ref{pr:square_general} with $\CC \eqdef \GRS{r}{\xv}{\yv}$ (which is the relevant quantity here since
$\tr(\CC)=\Alt{r}{\xv}{\yv}^\perp$) we get terms of the form $\tr\left(\starp{\CC}{\CC^{q^u}}\right)$ which will have a smaller dimension  than the generic upper bound $mr^2$. This is due to the fact that these  $\tr\left(\starp{\CC}{\CC^{q^u}}\right)$ will actually be duals of alternant codes for small values of $u$ as shown by the 
following lemma
\begin{lemma}
\label{lem:alternant}
Let $\CC \eqdef \GRS{r}{\xv}{\yv}$ and $f \eqdef \lfloor \log_q(r) \rfloor$. We have
\begin{eqnarray}
\tr\left(\starp{\CC}{\CC^{q^u}}\right)& \subseteq & \Alt{(r-1)(1+q^u)+1}{\xv}{\yv^{1+q^u}}^\perp\;\;\text{for all non-negative integers $u$,} \label{eq:alternant_inclusion}\\
\tr\left(\starp{\CC}{\CC^{q^u}}\right)& = & \Alt{(r-1)(1+q^u)+1}{\xv}{\yv^{1+q^u}}^\perp\;\text{for all integers $u$ in $\{0,\cdots,f\}$.}
\label{eq:alternant_equality} \end{eqnarray}
\end{lemma}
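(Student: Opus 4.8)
The plan is to reduce both statements to explicit monomial generators and a single combinatorial fact about sumsets of exponents. Throughout I work with the evaluation-code description $\CC = \GRS{r}{\xv}{\yv} = \Fqmspan{\xv^a \yv \mid 0 \le a < r}$ and exploit that the trace of a GRS code is the dual of an alternant code, as recorded in identity \eqref{eq: dual_alt}.

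First I would pin down $\CC^{q^u}$ concretely. Since the Frobenius iterate $x \mapsto x^{q^u}$ is a field automorphism of $\Fqm$ fixing $\Fq$, applying it componentwise to a codeword $\sum_a \lambda_a \xv^a \yv$ gives $\sum_a \lambda_a^{q^u}\, \xv^{a q^u} \yv^{q^u}$; as the $\lambda_a$ range over $\Fqm$ so do the $\lambda_a^{q^u}$, hence $\CC^{q^u} = \Fqmspan{\xv^{a q^u}\yv^{q^u} \mid 0\le a<r}$. Taking Schur products of generators then yields $(\xv^a \yv)\star(\xv^{b q^u}\yv^{q^u}) = \xv^{a+b q^u}\yv^{1+q^u}$, so that
\[
\starp{\CC}{\CC^{q^u}} = \Fqmspan{\xv^{a + b q^u}\,\yv^{1+q^u} \mid 0\le a,b<r}.
\]
Every exponent $a+bq^u$ lies in the interval $[0,(r-1)(1+q^u)]$, which gives the containment $\starp{\CC}{\CC^{q^u}} \subseteq \GRS{(r-1)(1+q^u)+1}{\xv}{\yv^{1+q^u}}$ of $\Fqm$-codes. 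Applying $\tr$ and invoking \eqref{eq: dual_alt} (namely $\tr(\GRS{s}{\xv}{\zv}) = \Alt{s}{\xv}{\zv}^\perp$) immediately produces \eqref{eq:alternant_inclusion}.

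For the equality \eqref{eq:alternant_equality} it suffices to upgrade the containment of $\Fqm$-codes to an equality for $u \le f$, after which applying $\tr$ finishes the argument. This reduces to the combinatorial claim that the sumset $S_u \eqdef \{a + b q^u \mid 0 \le a,b < r\}$ exhausts the full integer interval $\{0,1,\dots,(r-1)(1+q^u)\}$. The key observation is that the $b$-th block $\{bq^u + a \mid 0 \le a \le r-1\}$ covers the consecutive integers $[bq^u,\, bq^u + r - 1]$, and consecutive blocks leave no gap exactly when $q^u \le r$. Since $f = \floor{\log_q r}$ is characterized by $q^f \le r < q^{f+1}$, the condition $q^u \le r$ holds precisely for $u \le f$, which is the range in which equality is asserted; there the monomials $\xv^c \yv^{1+q^u}$ for all $0 \le c \le (r-1)(1+q^u)$ already lie in $\starp{\CC}{\CC^{q^u}}$.

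I expect the sumset-covering step to be the only point needing genuine care: one must verify that $q^u \le r$ is the sharp threshold for the blocks $[bq^u, bq^u + r - 1]$ to tile the interval without gaps, and tie this threshold cleanly to $f = \floor{\log_q r}$. A minor edge case to mention is when $(r-1)(1+q^u)+1 > n$, where the target GRS code degenerates to the full space $\Fqm^n$; there both identities still hold, since the trace of the full space is the full space. The remaining manipulations are routine bookkeeping with the monomial bases.
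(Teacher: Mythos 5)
Your proof is correct and follows essentially the same route as the paper's: both reduce the product code to the monomial generators $\xv^{a+bq^u}\yv^{1+q^u}$, bound the exponents by $(r-1)(1+q^u)$ for the inclusion, and observe that the exponent sumset covers all of $\{0,\dots,(r-1)(1+q^u)\}$ exactly when $q^u \le r$, i.e.\ $u \le \floor{\log_q r}$. The only (harmless) presentational difference is that you establish the equality at the level of $\Fqm$-linear GRS codes and then apply $\tr$, whereas the paper works directly with the $\Fq$-spans of traces $\tr\left(\alpha_l\xv^{a+bq^u}\yv^{1+q^u}\right)$.
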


\begin{proof}
We first notice that 
 \[
  \tr\left(\starp{\CC}{\CC^{q^u}}\right)=\Fqspan{\tr\left(\alpha_l\xv^{a+bq^u}\yv^{1+q^u}\right) \mid  0\le l<m, 0\le a,b \leq  r-1}.
  \]
  Clearly the powers $a+bq^u$ are all smaller than or equal to $(r-1)(1+q^u)$. This directly implies 
  \eqref{eq:alternant_inclusion}.
  For the equality case, we observe that we get all the powers in $\{0,\cdots,(r-1)(1+q^u)\}$ as long as
  $r-1 \geq q^u-1$, that is $r \geq q^u$ which is equivalent to $u \leq \lfloor \log_q(r) \rfloor$.
\end{proof}
The previous lemma implies that
\begin{cor}
Let $\CC \eqdef  \GRS{r}{\xv}{\yv}$. For all non-negative integers $u$, we have
  $$ \dim_{\Fq} \tr\left(\starp{\CC}{\CC^{q^u}}\right) \le m((r-1)(q^u+1)+1).$$
\end{cor}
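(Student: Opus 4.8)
The plan is to combine the inclusion \eqref{eq:alternant_inclusion} of Lemma \ref{lem:alternant} with the elementary fact that the $\Fq$-dimension of the trace of an $\Fqm$-linear code never exceeds $m$ times its $\Fqm$-dimension. First I would invoke Lemma \ref{lem:alternant} to write, for every non-negative integer $u$,
\[
\tr\left(\starp{\CC}{\CC^{q^u}}\right)\subseteq \Alt{(r-1)(1+q^u)+1}{\xv}{\yv^{1+q^u}}^\perp.
\]
Since the dimension of a subspace is at most the dimension of the code containing it, it suffices to bound the dimension of this dual alternant code.

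Next I would use the identity \eqref{eq: dual_alt}, which identifies the dual of an alternant code with the trace of the corresponding GRS code: setting $r' \eqdef (r-1)(1+q^u)+1$ and $\yv' \eqdef \yv^{1+q^u}$, we have $\Alt{r'}{\xv}{\yv'}^\perp = \tr\left(\GRS{r'}{\xv}{\yv'}\right)$. The code $\GRS{r'}{\xv}{\yv'}$ has $\Fqm$-dimension exactly $r'$, so the general bound $\dim_{\Fq}\tr(\DC)\le m\cdot\dim_{\Fqm}\DC$ recalled just after Proposition \ref{pr:general} yields
\[
\dim_{\Fq}\Alt{r'}{\xv}{\yv'}^\perp \le m\,r' = m\big((r-1)(q^u+1)+1\big).
\]
Chaining the two bounds gives the claimed inequality.

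There is essentially no obstacle here: the corollary is a direct consequence of Lemma \ref{lem:alternant} together with the standard upper bound on the dimension of a trace code, and no delicate argument is needed. The only point worth making explicit is that one uses the \emph{inclusion} \eqref{eq:alternant_inclusion}, which holds for all $u$, rather than the equality \eqref{eq:alternant_equality}, which is only guaranteed for $u\le f$; since we merely want an upper bound on the dimension, the inclusion suffices across the full range of $u$, and in particular the bound is stated for all non-negative integers $u$ even though it is attained only in the regime where the equality holds.
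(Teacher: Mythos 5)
Your proposal is correct and follows exactly the route the paper intends: the paper derives this corollary directly from the inclusion \eqref{eq:alternant_inclusion} of Lemma \ref{lem:alternant} combined with the standard bound $\dim_{\Fq}\tr(\DC)\le m\cdot\dim_{\Fqm}\DC$ for the trace of the GRS code realizing the dual alternant code. The only negligible imprecision is your claim that $\GRS{r'}{\xv}{\yv'}$ has dimension \emph{exactly} $r'$ (it is $\min(r',n)$), but since you only need an upper bound this changes nothing.
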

From this corollary we obtain directly Theorem \ref{thm:alternant_e>0}. The proof goes as follows.
\begin{proof}[Proof of Theorem \ref{thm:alternant_e>0}]
Let $\CC \eqdef  \GRS{r}{\xv}{\yv}$. 
\begin{align*}
\dim_{\Fq} \sq{(\Alt{r}{\xv}{\yv}^\perp)} &\le\dim_{\Fq} \sum_{u=0}^{\floor{\frac{m}{2}}}\tr\left(\starp{\CC}{\CC^{q^u}}\right) \;\;\text{(by Prop. \eqref{pr:square_general})}\\
&\le \sum_{u=0}^{\floor{\frac{m}{2}}} \dim_{\Fq} \tr\left(\starp{\CC}{\CC^{q^u}}\right) \\
&\le \sum_{u=0}^{e} m((r-1)(q^u+1)+1)+\left(\frac{m-1}{2}-e\right)mr^2\\
& \le \binom{rm+1}{2}-\frac{m}{2}(r-1)\left((2e+1)r-2\frac{q^{e+1}-1}{q-1}\right).
\end{align*}
Here, this inequality holds for any integer $e$ in $\{0,\cdots,\floor{\frac{m}{2}}\}$. To get the sharpest upper-bound, we 
 minimize the function
$\binom{rm+1}{2}-\frac{m}{2}(r-1)\left((2e+1)r-2\frac{q^{e+1}-1}{q-1}\right)$
 with respect to $e$. Obviously, this is equivalent to find the maximum over $\{0,\cdots,\floor{\frac{m}{2}}\}$ of the function
$$T(e)\eqdef er -\frac{q^{e+1}}{q-1}$$
in the discrete domain of non-negative integers.
We compute the discrete derivative $\Delta T : e \mapsto T(e+1)-T(e)$,
\begin{align*}
\Delta T(e)=T(e+1)-T(e)&=\left((e+1)r- \frac{q^{e+2}}{q-1}\right)-\left(er - \frac{q^{e+1}}{q-1} \right)\\&=r - q^{e+1}.
\end{align*}
This function is decreasing with $e$ and the smallest value of $e$ for which $\Delta T(e)  \leq 0$ corresponds to its maximum.
It is the smallest value of $e$ for which $r  \leq q^{e+1}$, i.e. $e=\floor{\log_q(r-1)}=\ea$.

\end{proof}
 \section{Goppa case with $r\ge q-1$}\label{section: Goppa_e>0}
In the previous two sections, we found linear relationships within the individual $\tr\left(\starp{\CC}{\CC^{q^u}}\right)$ subspaces, showing that they are spanned by less than $r^2m$ vectors if $r$ is large enough. We will see that the dimension of some $\tr\left(\starp{\CC}{\CC^{q^u}}\right)$ is even smaller in the Goppa case with $r\ge q-1$ (see \eqref{eq:inclusion} below). Moreover, they are no more disjoint, i.e. $\dim_{\Fq}\left(\tr\left(\starp{\CC}{\CC^{q^u}}\right)\cap \tr\left(\starp{\CC}{\CC^{q^v}}\right)\right) >0$ for some $0\le u<v\le \eg$. We will namely prove that

\begin{thm}\label{thm:goppa_e_qeg_1}
Let $\CC  \eqdef \GRS{r}{\xv}{\yv}$, where $y_i = \frac{1}{\Gamma(x_i)}$ and $\Gamma$ is a polynomial of degree $r$.
Let us define for any positive integer $v$ 
\begin{equation}
\BC_v \eqdef \Alt{r(q^v-q^{v-1}+1)}{\xv}{\yv^{q^v+1}}^\perp, \quad \text{and} \quad \BC_0 \eqdef \Alt{2r-1}{\xv}{\yv^2}^\perp.
\end{equation}
Let $f \eqdef \lfloor \log_q (r) \rfloor$. Then
\begin{eqnarray}
& &\tr\left(\starp{\CC}{\CC^{q^v}}\right) \subseteq \BC_v \quad \text{for all positive integers $v$,} \label{eq:inclusion}\\
& &\tr\left(\starp{\CC}{\CC^{q^u}}\right)=\BC_u \quad \text{for $0 \leq u \leq f$,} \label{eq:equality}\\
& &\tr\left(\starp{\CC}{\CC}\right)\subseteq \tr\left(\starp{\CC}{\CC^{q}}\right)\subseteq \dots\subseteq\tr\left(\starp{\CC}{\CC^{q^u}}\right) \quad \text{for $0 \leq u \leq f$.}
\label{eq:chain}
\end{eqnarray}
\end{thm}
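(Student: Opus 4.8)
\emph{Plan of proof.} The whole argument is about bookkeeping of numerators, so the first step is to make the three codes completely explicit at the level of generators. Exactly as in the computation preceding Lemma~\ref{lem:alternant}, $\tr\left(\starp{\CC}{\CC^{q^v}}\right)$ is spanned over $\Fq$ by the vectors $\tr\left(\alpha_l \xv^{a+bq^v}\yv^{1+q^v}\right)$ with $0\le l<m$ and $0\le a,b\le r-1$, and since $y_i=1/\Gamma(x_i)$ these are the $\tr\left(\alpha_l \frac{\xv^{a+bq^v}}{\Gamma(\xv)^{1+q^v}}\right)$. Dually, by \eqref{eq: dual_alt}, $\BC_v=\tr\left(\GRS{L_v}{\xv}{\yv^{q^v+1}}\right)$ with $L_v\eqdef r(q^v-q^{v-1}+1)$, spanned by the $\tr\left(\alpha_l\frac{\xv^{c}}{\Gamma(\xv)^{q^v+1}}\right)$ with $0\le c<L_v$, and $\BC_0=\tr\left(\starp{\CC}{\CC}\right)=\Alt{2r-1}{\xv}{\yv^2}^\perp$ by Proposition~\ref{pr: square_GRS}. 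Thus each of \eqref{eq:inclusion}, \eqref{eq:equality}, \eqref{eq:chain} becomes a statement about which numerators $x^c$ (and which denominator power of $\Gamma$) can be realised.

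The single engine is the identity forced by $\deg\Gamma=r$: writing $\Gamma(z)=\sum_{j=0}^r g_jz^j$ with $g_r\neq0$, one has $z^{rq^{w}}=g_r^{-q^{w}}\bigl(\Gamma(z)^{q^{w}}-\sum_{j<r}g_j^{q^{w}}z^{jq^{w}}\bigr)$ for every $w$. Applied to a generator $\tr\left(\alpha\frac{\xv^{c}}{\Gamma(\xv)^{1+q^v}}\right)$, I reduce the numerator by successive Euclidean division by $\Gamma(z)^{q^{v-1}}$ (which divides $\Gamma^{q^v}$). Peeling this off $q-1$ times lowers the denominator exponent from $1+q^v$ down to $1+q^{v-1}$ and produces at each step a remainder $R_k$ of degree $<rq^{v-1}$. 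A direct degree count shows that, after clearing to the common denominator $\Gamma^{q^v+1}$, the $k$-th remainder has numerator degree $<rkq^{v-1}\le r(q-1)q^{v-1}<L_v$, hence lies in $\BC_v$; meanwhile the surviving quotient is a single level-$(v-1)$ generator $\tr\left(\alpha\frac{\xv^{c'}}{\Gamma(\xv)^{1+q^{v-1}}}\right)$ with $c'=c-r(q^v-q^{v-1})\le(r-1)(1+q^{v-1})$. This is the heart of the matter: it rewrites any level-$v$ generator as an element of $\BC_v$ plus one level-$(v-1)$ generator.

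Granting this, \eqref{eq:inclusion} follows by descending induction on the level: iterating the reduction places $\tr\left(\starp{\CC}{\CC^{q^v}}\right)$ inside $\BC_v+\BC_{v-1}+\dots+\BC_0$, the bottom term having denominator $\Gamma^2$ and numerator degree $<2r-1$, so it lies in $\BC_0$. Collapsing this sum to $\BC_v$ requires the chain \eqref{eq:chain}, which I would establish first and by the same device: to prove $\BC_u\subseteq\BC_{u+1}$, take a generator $\tr\left(\alpha\frac{\xv^{c}}{\Gamma(\xv)^{1+q^{u}}}\right)$ with $c<L_u$, apply $\tr(w)=\tr(w^q)$ to turn it into $\tr\left(\alpha^q\frac{\xv^{cq}}{\Gamma(\xv)^{q+q^{u+1}}}\right)$, and divide the numerator $x^{cq}$ by $\Gamma^{q-1}$; the quotient term has degree $cq-(q-1)r<L_{u+1}$ and so lands in $\BC_{u+1}$, leaving a low-numerator remainder to be absorbed. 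Finally \eqref{eq:equality} is the easy direction on top of \eqref{eq:inclusion}: for $u\le f=\floor{\log_q r}$ one has $q^u\le r$, so every $c<L_u$ can be written as $a+bq^u$ with $0\le a<q^u\le r$ and $0\le b<r$, whence all generators of $\BC_u$ already occur in the spanning set of $\tr\left(\starp{\CC}{\CC^{q^u}}\right)$ and the reverse inclusion holds.

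The main obstacle is precisely the chain \eqref{eq:chain}, equivalently the requirement that the reduction terminate at the sharp exponent $\Gamma^{q^v+1}$ and numerator bound $L_v$ rather than at the coarser $rq^v$ of Lemma~\ref{lem:alternant}. The difficulty is the mismatch of $\Gamma$-powers: raising the Frobenius level through $\tr(w)=\tr(w^q)$ always injects a spurious factor $\Gamma^{q-1}$ into the denominator, and the low-degree remainders left after dividing it out still carry a denominator exponent exceeding $q^v+1$. Showing that the traces of these leftover terms nevertheless fall into $\BC_v$, and that the range $[L_v,rq^v)$ genuinely collapses when $v\ge2$, is where the Goppa structure and the careful tracking of the Frobenius-twisted coefficients $g_j^{q^{w}}$ must be used. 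I expect this bookkeeping, rather than any single conceptual step, to be the technical crux of the proof.
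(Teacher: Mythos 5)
Your overall architecture is close to the paper's: reduce numerators by Euclidean division by powers of $\Gamma$, use $\tr(w)=\tr(w^q)$ to trade denominator exponents, prove the nested chain $\BC_0\subseteq\BC_1\subseteq\cdots$, and get \eqref{eq:equality} by writing each exponent $c<r(q^u-q^{u-1}+1)$ as $a+bq^u$ with $0\le a,b\le r-1$ (that last argument is correct, and is a legitimate variant of the paper's comparison of alternant degrees). But there is a genuine gap, and you have located it yourself: your proof of $\BC_u\subseteq\BC_{u+1}$, on which both \eqref{eq:chain} and the collapse of $\BC_v+\BC_{v-1}+\dots+\BC_0$ to $\BC_v$ in your proof of \eqref{eq:inclusion} depend, is not completed. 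Applying $\tr(w)=\tr(w^q)$ to the \emph{whole} codeword turns $\tr\left(\alpha\,\xv^{c}/\Gamma(\xv)^{1+q^{u}}\right)$ into $\tr\left(\alpha^q\,\xv^{cq}/\Gamma(\xv)^{q+q^{u+1}}\right)$; after dividing $x^{cq}$ by $\Gamma^{q-1}$ the quotient term does land in $\BC_{u+1}$, but the remainder term $\tr\left(\alpha^q B(\xv)/\Gamma(\xv)^{q+q^{u+1}}\right)$ with $\deg B<r(q-1)$ keeps the denominator exponent $q+q^{u+1}>q^{u+1}+1$. There is no direct way to ``absorb'' it: since the quotient part is already known to lie in $\BC_{u+1}$, the remainder lies in $\BC_{u+1}$ if and only if the original generator does, so proving that the leftover falls into $\BC_{u+1}$ is exactly the statement you set out to prove. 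The argument is circular, and no tracking of the Frobenius-twisted coefficients $g_j^{q^w}$ repairs it; the problem is the order of operations, not the bookkeeping.

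The paper's Lemma \ref{lem:reduction} avoids the spurious $\Gamma^{q-1}$ altogether by never applying the Frobenius to the whole codeword. To prove $\BC_v\subseteq\BC_{v+1}$ (Lemma \ref{lem:nested}), write $\tr\left(P(\xv)/\Gamma(\xv)^{q^v+1}\right)=\tr\left(P(\xv)\Gamma(\xv)^{q^{v+1}-q^v}/\Gamma(\xv)^{q^{v+1}+1}\right)$, a plain multiplication of numerator and denominator with no Frobenius, then divide the new numerator by $\Gamma^{q^{v+1}-q^v+1}$: the remainder $B$ satisfies $\deg B<r(q^{v+1}-q^v+1)$, so $\tr\left(B(\xv)/\Gamma(\xv)^{q^{v+1}+1}\right)$ lies in $\BC_{v+1}$ \emph{by definition}, while the quotient part simplifies to $\tr\left(A(\xv)/\Gamma(\xv)^{q^v}\right)$, a clean $q$-power denominator, and only there does one apply $\tr(w)=\tr(w^q)$, getting $\tr\left(A(\xv)^q\Gamma(\xv)/\Gamma(\xv)^{q^{v+1}+1}\right)$, back over the correct denominator, with $\deg A^q\Gamma<\deg P$ whenever $\deg P<r(q^{v+1}+1)$, so the iteration terminates. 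The same lemma applied at level $v$ proves \eqref{eq:inclusion} directly, starting from the bound $\deg P\le (r-1)(q^v+1)$ of Lemma \ref{lem:alternant}, with no need for your descending induction through $\BC_{v-1},\dots,\BC_0$. If you replace your engine by this reduction, the rest of your outline goes through.
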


Before proving this theorem it will be useful to state and prove two lemmas. The first one deals with the inclusion $\tr\left(\starp{\CC}{\CC^{q^v}}\right) \subseteq \BC_v$. For this, we first recall Lemma \ref{lem:alternant} which when applied to Goppa codes yields immediately  that for all integers $v$ we have:
\begin{eqnarray}
\tr\left(\starp{\CC}{\CC^{q^v}}\right) & \subseteq & \left\{ \tr\left( \frac{P(\xv)}{\Gamma(\xv)^{q^{v}+1}}\right)\mid \deg P < (r-1)(q^v+1)+1\right\} \label{eq:Goppa}\\
& = & \Alt{(r-1)(q^v+1)+1}{\xv}{\yv^{q^v+1}}^\perp. \nonumber
\end{eqnarray}
On the other hand, by definition of $\BC_v$ we have
\begin{equation}
\label{eq:BCv_definition}
\BC_v =  \left\{ \tr\left( \frac{P(\xv)}{\Gamma(\xv)^{q^{v}+1}}\right)\mid \deg P < r(q^v-q^{v-1}+1)\right\}.
\end{equation}
Depending on how $q$ compares to $r$, $r(q^v-q^{v-1}+1)$ might be either greater or smaller than $(r-1)(q^v+1)+1$. The key for showing the inclusion 
$\tr\left(\starp{\CC}{\CC^{q^v}}\right) \subseteq \BC_v$ will be to show that under certain conditions a codeword of the form $\tr\left( \frac{P(\xv)}{\Gamma(\xv)^{q^{v}+1}}\right)$ can be written as $\tr\left( \frac{Q(\xv)}{\Gamma(\xv)^{q^{v}+1}}\right) + \cv$ for some codeword $\cv \in \BC_v$ with a polynomial $Q$ such that 
$\deg Q < \deg P$. For performing this task we will use the following lemma.

\begin{lem}\label{lem:reduction}
Let $P(X)$ be a polynomial in $\Fqm[X]$ and consider the Euclidean division of $P$ by $\Gamma^{q^v-q^{v-1}+1}$:
$P = A \Gamma^{q^v-q^{v-1}+1}+ B$ where $\deg B < \deg \Gamma^{q^v-q^{v-1}+1}$. We have
\begin{eqnarray}
 \tr\left( \frac{P(\xv)}{\Gamma(\xv)^{q^v+1}}\right) &= &  \tr\left( \frac{A(\xv)^q\Gamma(\xv)}{\Gamma(\xv)^{q^v+1}}\right) + \cv, \;\text{for some $\cv \in \BC_v$}\label{eq:polynomial_switching}\\
 \deg A^q \Gamma & < & \deg P\quad \text{if $\deg P < r(q^v+1)$.} \label{eq:degree_reduction}
\end{eqnarray}
\end{lem}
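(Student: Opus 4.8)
The plan is to substitute the Euclidean division $P = A\Gamma^{q^v-q^{v-1}+1}+B$ directly into the evaluation vector $\frac{P(\xv)}{\Gamma(\xv)^{q^v+1}}$ and split it according to its two summands. The tail term is the easy one: since $\deg B < \deg \Gamma^{q^v-q^{v-1}+1} = r(q^v-q^{v-1}+1)$, the description \eqref{eq:BCv_definition} of $\BC_v$ shows at once that $\cv \eqdef \tr\left(\frac{B(\xv)}{\Gamma(\xv)^{q^v+1}}\right)$ lies in $\BC_v$, so this will be the codeword $\cv$ of the statement. It then remains to treat the quotient term $\tr\left(\frac{A(\xv)\Gamma(\xv)^{q^v-q^{v-1}+1}}{\Gamma(\xv)^{q^v+1}}\right)$ and identify it with $\tr\left(\frac{A(\xv)^q\Gamma(\xv)}{\Gamma(\xv)^{q^v+1}}\right)$.

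The key step, and the conceptual heart of the lemma, is a Frobenius rewriting. After cancelling $\Gamma(\xv)^{q^v-q^{v-1}+1}$ against the denominator, the quotient term reduces to $\tr\left(\frac{A(\xv)}{\Gamma(\xv)^{q^{v-1}}}\right)$. I would then invoke the Frobenius invariance of the trace, $\tr(z)=\tr(z^q)$ for every $z\in\Fqm$, applied component-wise, to rewrite it as $\tr\left(\left(\frac{A(\xv)}{\Gamma(\xv)^{q^{v-1}}}\right)^q\right)=\tr\left(\frac{A(\xv)^q}{\Gamma(\xv)^{q^v}}\right)=\tr\left(\frac{A(\xv)^q\Gamma(\xv)}{\Gamma(\xv)^{q^v+1}}\right)$, which is precisely \eqref{eq:polynomial_switching}. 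The point of this manoeuvre is that it restores the ``alternant denominator'' $\Gamma^{q^v+1}$ while trading the numerator $P$ for $A^q\Gamma$; iterating it is what will eventually push the numerator degree below $r(q^v-q^{v-1}+1)$ and hence land the whole expression in $\BC_v$.

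For the degree estimate \eqref{eq:degree_reduction}, if $A=0$ there is nothing to prove, so I assume $\deg P \ge r(q^v-q^{v-1}+1)$; then the Euclidean division gives $\deg A = \deg P - r(q^v-q^{v-1}+1)$ and hence $\deg(A^q\Gamma)=q\deg A + r$. A short computation shows that $\deg(A^q\Gamma)<\deg P$ is equivalent to $(q-1)\deg P < r\left(q^{v+1}-q^v+q-1\right)$, and using the factorization $q^{v+1}-q^v+q-1=(q-1)(q^v+1)$ this is exactly the hypothesis $\deg P < r(q^v+1)$. I do not expect a genuine obstacle here: the Frobenius rewriting is the one non-mechanical ingredient, while the only thing requiring care is checking that the threshold produced by the degree arithmetic matches the stated $r(q^v+1)$ exactly — it is the identity $q^{v+1}-q^v+q-1=(q-1)(q^v+1)$ that guarantees the reduction is strict across the entire claimed range.
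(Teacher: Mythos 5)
Your proposal is correct and follows essentially the same route as the paper's own proof: the same splitting via the Euclidean division (with the $B$-term landing in $\BC_v$ by \eqref{eq:BCv_definition}), the same cancellation to $\Gamma(\xv)^{q^{v-1}}$ followed by the Frobenius identity $\tr(z)=\tr(z^q)$, and the same degree arithmetic, where your factorization $q^{v+1}-q^v+q-1=(q-1)(q^v+1)$ is exactly the simplification the paper uses to reach the threshold $\deg P < r(q^v+1)$.
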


\begin{proof}
For the first point, we just have to observe that 
\begin{eqnarray*}
\tr\left( \frac{P(\xv)}{\Gamma(\xv)^{q^v+1}}\right) & = &\tr\left( \frac{A(\xv)\Gamma(\xv)^{q^{v}-q^{v-1}+1}+B(\xv)}{\Gamma(\xv)^{q^v+1}}\right)  \\
& = & \tr\left( \frac{A(\xv)}{\Gamma(\xv)^{q^{v-1}}}\right) + \tr\left( \frac{B(\xv)}{\Gamma(\xv)^{q^{v}+1}}\right)\\
& = & \tr\left( \frac{A(\xv)^q}{\Gamma(\xv)^{q^{v}}}\right) + \tr\left( \frac{B(\xv)}{\Gamma(\xv)^{q^v+1}}\right)\quad \text{(since $\tr(\alpha^q) = \tr(\alpha)$ for $\alpha \in \Fqm$)}\\
& = &  \tr\left( \frac{A(\xv)^q \Gamma(\xv)}{\Gamma(\xv)^{q^{v}+1}}\right) + \cv \quad \quad \text{with $\cv \in \BC_v$.}
\end{eqnarray*}
The last equality follows immediately from the definition of $\BC_v$ (see \eqref{eq:BCv_definition}) and the fact that 
$\deg B < r(q^v-q^{v-1}+1)$. For the second point, let us just observe that either $A=0$  in which case \eqref{eq:degree_reduction} clearly holds or
$$
\deg A = \deg P - r(q^v-q^{v-1}+1)
$$
in which case
$$
\deg A^q \Gamma = q \deg P -qr(q^v-q^{v-1}+1)+r = q \deg P-r(q-1)(q^v+1).
$$
Clearly 
\begin{eqnarray*}
\deg A^q \Gamma < \deg P &\Leftrightarrow & q \deg P-r(q-1)(q^v+1) < \deg P \\
&\Leftrightarrow& \deg P < r(q^v+1).
\end{eqnarray*}
\end{proof}

The second lemma shows that the $\BC_v$'s form a nested family of codes:
\begin{lem}
\label{lem:nested}
We have for all $v \geq 0$:
$$
\BC_0 \subseteq \BC_1  \subseteq \cdots \subseteq \BC_v \subseteq \BC_{v+1} \subseteq \cdots
$$
\end{lem}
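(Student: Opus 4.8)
\textbf{Proof plan for Lemma \ref{lem:nested}.}
The plan is to reduce everything to the single-step inclusion $\BC_v \subseteq \BC_{v+1}$ for every $v \geq 0$; since $\subseteq$ is transitive, these adjacent inclusions immediately assemble into the full nested family. I will work throughout with the polynomial description recorded in \eqref{eq:BCv_definition}: for $v \geq 1$ a generic element of $\BC_v$ has the form $\tr\left(P(\xv)/\Gamma(\xv)^{q^v+1}\right)$ with $\deg P < r(q^v-q^{v-1}+1)$, while $\BC_0 = \Alt{2r-1}{\xv}{\yv^2}^\perp = \tr\left(\GRS{2r-1}{\xv}{\yv^2}\right)$ consists of the $\tr\left(P(\xv)/\Gamma(\xv)^{2}\right)$ with $\deg P < 2r-1$, i.e. the same shape with denominator $\Gamma^{q^0+1}=\Gamma^2$.

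The natural first move is to rewrite a generator of $\BC_v$ so that it carries the denominator of $\BC_{v+1}$. Given $\tr\left(P(\xv)/\Gamma(\xv)^{q^v+1}\right)$ with $\deg P < r(q^v-q^{v-1}+1)$, I multiply numerator and denominator by $\Gamma^{q^{v+1}-q^v}$ to obtain
\[
\tr\left( \frac{P(\xv)}{\Gamma(\xv)^{q^v+1}}\right) = \tr\left( \frac{Q(\xv)}{\Gamma(\xv)^{q^{v+1}+1}}\right), \qquad Q \eqdef P\,\Gamma^{q^{v+1}-q^v}.
\]
This is where the main obstacle appears: a direct degree count gives $\deg Q < r(q^{v+1}-q^{v-1}+1)$, which is strictly larger (by $r(q^{v}-q^{v-1})$) than the threshold $r(q^{v+1}-q^{v}+1)$ that defines membership in $\BC_{v+1}$, so $Q$ is not visibly admissible yet. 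What saves the argument is that $\deg Q < r(q^{v+1}+1)$ as well (because $-q^{v-1}\leq 0$), and this is exactly the hypothesis \eqref{eq:degree_reduction} required to invoke Lemma \ref{lem:reduction} at level $v+1$.

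I then apply Lemma \ref{lem:reduction} (with $v$ replaced by $v+1$) to $Q$: by \eqref{eq:polynomial_switching} it rewrites $\tr\left(Q(\xv)/\Gamma(\xv)^{q^{v+1}+1}\right)$, up to a codeword of $\BC_{v+1}$, as $\tr\left(\tilde Q(\xv)/\Gamma(\xv)^{q^{v+1}+1}\right)$ with $\deg \tilde Q < \deg Q$ by \eqref{eq:degree_reduction}. Iterating this step yields a strictly decreasing sequence of numerator degrees, all of which remain below $r(q^{v+1}+1)$, so the hypothesis of Lemma \ref{lem:reduction} is preserved at each stage; after finitely many steps the numerator degree drops below $r(q^{v+1}-q^{v}+1)$, at which point the remaining term lies in $\BC_{v+1}$ directly by \eqref{eq:BCv_definition}. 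Since every correction term accumulated along the way is also in $\BC_{v+1}$, the original generator belongs to $\BC_{v+1}$, establishing $\BC_v \subseteq \BC_{v+1}$. The base case $\BC_0 \subseteq \BC_1$ is identical, multiplying instead by $\Gamma^{q-1}$ so that $\deg Q < (2r-1)+(q-1)r < r(q+1)$, which again satisfies the hypothesis of Lemma \ref{lem:reduction} at level $1$. This proves all the single-step inclusions and hence the nested chain.
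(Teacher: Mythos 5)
Your proof is correct and follows essentially the same route as the paper: reduce to the adjacent inclusions $\BC_v \subseteq \BC_{v+1}$, rewrite a generator of $\BC_v$ over the denominator $\Gamma^{q^{v+1}+1}$, and invoke Lemma \ref{lem:reduction} at level $v+1$. The only cosmetic difference is that for $v \geq 1$ the paper verifies by a direct degree computation that a \emph{single} application of Lemma \ref{lem:reduction} already lands below the threshold $r(q^{v+1}-q^{v}+1)$, whereas you iterate the lemma until the degree drops --- which is exactly the termination argument the paper itself uses for the case $v=0$, so both variants are sound.
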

\begin{proof}
We will first prove that $\BC_v \subseteq \BC_{v+1}$ for $v \geq 1$.
Let us notice that 
\begin{align}
\BC_v&= \Fqspan{\tr\left(\alpha_l \xv^c \yv^{q^v+1}\right)\mid 0\le l<m, 0\le c< r(q^v-q^{v-1}+1)}\nonumber \\
& = \left\{ \tr\left( \frac{P(\xv)}{\Gamma(\xv)^{q^v+1}}\right): P \in \Fqm[X],\;\deg(P) < r(q^v-q^{v-1}+1)\right\} \label{eq:defBC}
\end{align}
Consider now a codeword $\tr\left( \frac{P(\xv)}{\Gamma(\xv)^{q^v+1}}\right)$ in $\BC_v$ and let us prove that it is 
in $\BC_{v+1}$. By \eqref{eq:defBC} we know that 
\begin{equation}
\label{eq:degP}
\deg P < r(q^v-q^{v-1}+1).
\end{equation} 
Then, let us notice that
\begin{eqnarray*}
\tr\left( \frac{P(\xv)}{\Gamma(\xv)^{q^v+1}}\right) &=  &\tr\left( \frac{P(\xv)\Gamma(\xv)^{q^{v+1}-q^v}}{\Gamma(\xv)^{q^{v+1}+1}}\right)\\
& = & \tr\left( \frac{A^q(\xv)\Gamma(\xv)}{\Gamma(\xv)^{q^{v+1}+1}}\right) +\cv
\end{eqnarray*}
where $\cv \in \BC_{v+1}$, $A$ is the quotient of the division of $P \Gamma^{q^{v+1}-q^v}$ by $\Gamma^{q^{v+1}-q^v+1}$, and we used  Lemma \ref{lem:reduction} (with $v+1$ instead of $v$ and $P \Gamma^{q^{v+1}-q^v}$ instead of $P$). Either $A=0$ and it follows immediately that 
$\tr\left( \frac{P(\xv)}{\Gamma(\xv)^{q^v+1}}\right)$ belongs to $\BC_{v+1}$, or else, if $v\ge 1$,
\begin{eqnarray*}
\deg A^q \Gamma & = & q \left(\deg P +\deg \Gamma (q^{v+1}-q^v) - \deg \Gamma (q^{v+1}-q^v+1) \right) + \deg \Gamma 
\nonumber \\
& < &q\left(r(q^v-q^{v-1}+1) +r (q^{v+1}-q^v)- r(q^{v+1}-q^v+1)\right] +r \;\; \text{(by definition of $\BC_v$ -see \eqref{eq:BCv_definition})} \\
& =& q\left( r(q^v-q^{v-1}) \right) +r \nonumber\\
& = & r (q^{v+1}-q^v+1). \label{eq:lastdeg}
\end{eqnarray*}

This shows that in this case too, we have  that $\tr\left( \frac{P(\xv)}{\Gamma(\xv)^{q^v+1}}\right)$ belongs to $\BC_{v+1}$.
If $v=0$ we proceed similarly, the only difference being that now the definition of $\BC_0$ is not obtained by  plugging $v=0$ into \eqref{eq:BCv_definition}, but is given by
\begin{align}
\BC_0
& = \left\{ \tr\left( \frac{P(\xv)}{\Gamma(\xv)^{2}}\right): P \in \Fqm[X],\;\deg(P) < 2r-1\right\}. \nonumber 
\end{align}

Consider a codeword $\tr\left( \frac{P(\xv)}{\Gamma(\xv)^{2}}\right)$ in $\BC_0$ and let us prove that it belongs to 
$\BC_1$. We proceed similarly as before and write
\begin{eqnarray*}
\tr\left( \frac{P(\xv)}{\Gamma(\xv)^{2}}\right) &=  &\tr\left( \frac{P(\xv)\Gamma(\xv)^{q-1}}{\Gamma(\xv)^{q+1}}\right).
\end{eqnarray*}
We can again apply Lemma \ref{lem:reduction} with $v=1$ and $P(\xv)\Gamma(\xv)^{q-1}$ (for the $P$ appearing in this lemma).
Notice that
\begin{eqnarray*}
\deg P(\xv)\Gamma(\xv)^{q-1} & < & 2r-1+r(q-1)\\
& < & r(q+1).
\end{eqnarray*}
This shows that we can write
$$
\tr\left( \frac{P(\xv)\Gamma(\xv)^{q-1}}{\Gamma(\xv)^{q+1}}\right)= \tr\left( \frac{Q(\xv)}{\Gamma(\xv)^{q^v+1}}\right) +\cv
$$
with $\cv$ belonging to $\BC_1$ and $\deg Q < \deg P\Gamma^{q+1}$. We can carry on this process further on $ \tr\left( \frac{Q(\xv)}{\Gamma(\xv)^{q^v+1}}\right)$ by reducing the degree of $Q$ with Lemma \ref{lem:reduction} until getting at the end a quotient $A$ of Lemma \ref{lem:reduction} equal to $0$, thus  showing that $\tr\left( \frac{P(\xv)}{\Gamma(\xv)^{2}}\right)$ belongs to $\BC_1$.
\end{proof}

We are now ready  to prove Theorem \ref{thm:goppa_e_qeg_1}.
\begin{proof}[Proof of Theorem \ref{thm:goppa_e_qeg_1}]
{\em Proof of \eqref{eq:inclusion}.}
We start by using Lemma \ref{lem:alternant} which implies that 
\begin{eqnarray}
\tr\left(\starp{\CC}{\CC^{q^v}}\right) & \subseteq & \left\{ \tr\left( \frac{P(\xv)}{\Gamma(\xv)^{q^{v}+1}}\right): \deg P < (r-1)(q^v+1)+1\right\} 
\nonumber \\
& = & \Alt{(r-1)(q^v+1)+1}{\xv}{\yv^{q^v+1}}^\perp \label{eq:Goppa_bis}. 
\end{eqnarray}
We consider a codeword $\tr\left( \frac{P(\xv)}{\Gamma(\xv)^{q^{v}+1}}\right)$ in $\tr\left(\starp{\CC}{\CC^{q^v}}\right)$ and apply Lemma \ref{lem:reduction} to it. Since $\deg P \leq (r-1)(q^v+1)<r(q^v+1)$ we can apply \eqref{eq:degree_reduction} and get that 
$$
\tr\left( \frac{P(\xv)}{\Gamma(\xv)^{q^{v}+1}}\right) = \tr\left( \frac{Q(\xv)}{\Gamma(\xv)^{q^{v}+1}}\right) +\cv,
$$
where $\cv \in \BC_v$ and $\deg Q < \deg P$. We can continue this process by applying iteratively Lemma \ref{lem:reduction} to $Q$ until the quotient $A$ becomes eventually the zero
polynomial, which shows that 
$$
\tr\left( \frac{P(\xv)}{\Gamma(\xv)^{q^{v}+1}}\right) \in \BC_v.
$$
\medskip
\noindent
{\em Proof of \eqref{eq:equality}.}
When $v \leq f$, Lemma \ref{lem:alternant} shows that the inclusion \eqref{eq:Goppa_bis} is actually an equality.
We therefore have
$$
\tr\left(\starp{\CC}{\CC^{q^v}}\right)=\Alt{(r-1)(q^v+1)+1}{\xv}{\yv^{q^v+1}}^\perp.
$$
On the other hand, from \eqref{eq:inclusion}, we know that 
\begin{equation}
\label{eq:reverse_inclusion}
\tr\left(\starp{\CC}{\CC^{q^v}}\right) \subseteq \BC_v = \Alt{r(q^v-q^{v-1}+1)}{\xv}{\yv^{q^v+1}}^\perp.
\end{equation}
Observe now that $r\geq q^v$ implies that $(r-1)(q^v+1)\geq r(q^v-q^{v-1}+1)-1$, since
\begin{eqnarray*}
(r-1)(q^v+1)-r(q^v-q^{v-1}+1)+1 & = & (r-q)q^{v-1}.
\end{eqnarray*}
This shows that 
$$
\BC_v = \Alt{r(q^v-q^{v-1}+1)}{\xv}{\yv^{q^v+1}}^\perp \subseteq \Alt{(r-1)(q^v+1)+1}{\xv}{\yv^{q^v+1}}^\perp
$$
which combined with the reverse inclusion \eqref{eq:reverse_inclusion} proves Equality \eqref{eq:equality}.

\medskip
\noindent
{\em Proof of \eqref{eq:chain}.} This is a direct consequence of Lemma \ref{lem:nested} and that the $\BC_v$'s coincide with the
$\tr\left(\starp{\CC}{\CC^{q^v}}\right)$'s in this range from \eqref{eq:equality}.
\end{proof}

A direct consequence of Theorem~\ref{thm:goppa_e_qeg_1} is that
\begin{cor}
Let $\Goppa{\xv}{\Gamma}$ be a Goppa code of order $r\ge q-1$ over $\Fq$. Then
\[ \sq{(\Goppa{\xv}{\Gamma}^\perp)}\subseteq \BC_e+\sum_{u=e+1}^{\floor{\frac{m}{2}}} \tr\left(\starp{\CC}{\CC^{q^u}}\right),\]
for any non-negative integer $e$.
\end{cor}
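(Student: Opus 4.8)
The plan is to combine the set-theoretic inclusion for the square of a trace code (Proposition \ref{pr:square_general}) with the two structural facts already established about the pieces $\tr\left(\starp{\CC}{\CC^{q^u}}\right)$: their containment in the alternant duals $\BC_u$ (Theorem \ref{thm:goppa_e_qeg_1}) and the fact that these $\BC_u$ form a nested family (Lemma \ref{lem:nested}). First I would set $\CC \eqdef \GRS{r}{\xv}{\yv}$ with $y_i = \frac{1}{\Gamma(x_i)}$, so that $\Goppa{\xv}{\Gamma}^\perp = \tr(\CC)$ by \eqref{eq: dual_alt}, and invoke the inclusion \eqref{eq:inclusion_general} of Proposition \ref{pr:square_general} to obtain
\[
\sq{(\Goppa{\xv}{\Gamma}^\perp)} = \sqb{\tr(\CC)} \subseteq \sum_{u=0}^{\floor{\frac{m}{2}}}\tr\left(\starp{\CC}{\CC^{q^u}}\right).
\]

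Next I would split this sum at the index $e$, writing it as the head $\sum_{u=0}^{\min(e,\floor{m/2})}\tr\left(\starp{\CC}{\CC^{q^u}}\right)$ plus the tail $\sum_{u=e+1}^{\floor{m/2}}\tr\left(\starp{\CC}{\CC^{q^u}}\right)$, and show that the head collapses entirely into $\BC_e$. Each summand in the head is controlled by Theorem \ref{thm:goppa_e_qeg_1}: the $u=0$ term equals $\BC_0$ by \eqref{eq:equality} (which applies since $f = \floor{\log_q r} \geq 0$ as $r \geq q-1 \geq 1$), while for $1 \leq u \leq e$ the inclusion \eqref{eq:inclusion} gives $\tr\left(\starp{\CC}{\CC^{q^u}}\right) \subseteq \BC_u$. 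The nesting $\BC_0 \subseteq \BC_1 \subseteq \cdots$ of Lemma \ref{lem:nested} then yields $\BC_u \subseteq \BC_e$ for every $u \le e$, so the whole head lies in $\BC_e$. Adding back the untouched tail gives exactly the claimed inclusion.

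The argument is essentially bookkeeping, and I expect the only delicate points to be edge cases rather than any genuine obstacle. One must treat $u = 0$ separately, since \eqref{eq:inclusion} is stated only for positive $v$; here I would use the equality \eqref{eq:equality} at $u=0$ instead. The second case to check is $e > \floor{m/2}$, where the tail sum is empty: then one simply bounds every term of the full sum, for $0 \le u \le \floor{m/2}$, by $\BC_u \subseteq \BC_{\floor{m/2}} \subseteq \BC_e$ (using Lemma \ref{lem:nested}, valid for all nonnegative indices, together with \eqref{eq:inclusion} which holds for every positive integer $v$ and not merely $v \le f$), so that $\sq{(\Goppa{\xv}{\Gamma}^\perp)} \subseteq \BC_e$ and the statement still holds with an empty tail. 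No dimension estimate is needed at this stage; the corollary is purely an inclusion, and the quantitative consequences would come afterwards from bounding $\dim_{\Fq}\BC_e$ and the tail terms separately.
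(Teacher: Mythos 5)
Your proposal is correct and follows essentially the same route as the paper: apply inclusion \eqref{eq:inclusion_general} of Proposition \ref{pr:square_general} to $\CC=\GRS{r}{\xv}{\yv}$, then collapse the terms with $u\le e$ into $\BC_e$ via Theorem \ref{thm:goppa_e_qeg_1} and the nesting Lemma \ref{lem:nested}, exactly as the paper does (it presents this as a ``direct consequence'' and spells out the key step $\tr\left(\starp{\CC}{\CC^{q^u}}\right)\subseteq \BC_e$ for $u\le e$ inside the proof of Corollary \ref{cor: goppa_dim}). Your explicit handling of the edge cases $u=0$ (via \eqref{eq:equality}) and $e>\floor{m/2}$ is careful bookkeeping that the paper leaves implicit, but it is not a different argument.
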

We can therefore conclude that
\begin{cor} \label{cor: goppa_dim}
Let $\Goppa{\xv}{\Gamma}$ be a Goppa code of order $r\ge q-1$ over $\Fq$. Then
\[ \dim_{\Fq} \sq{(\Goppa{\xv}{\Gamma}^\perp)} \le \binom{rm+1}{2}-\frac{m}{2}r\left((2\eg+1)r-2(q-1)q^{\eg-1}-1\right).\]
\begin{proof}
From Theorem~\ref{thm:goppa_e_qeg_1} and Lemma~\ref{lem:nested}, we have $\tr\left(\starp{\CC}{\CC^{q^u}}\right)\subseteq \BC_e$, for all $u\le e$. Therefore
\begin{align*}
\dim_{\Fq} \sq{(\Goppa{\xv}{\Gamma}^\perp)}&\le\dim_{\Fq}\left(\sum_{u=0}^{\floor{\frac{m}{2}}}\tr\left(\starp{\CC}{\CC^{q^u}}\right)\right)\\
&\le \dim_{\Fq}(\BC_e)+\sum_{u=e+1}^{\floor{\frac{m}{2}}} \dim_{\Fq} \tr\left(\starp{\CC}{\CC^{q^u}}\right) \;\;\text{(for arbitrary $e \in \{0,\cdots,\floor{\frac{m}{2}}\}$)}\\
&\le rm(q^e-q^{e-1}+1)+\left(\frac{m-1}{2}-e\right)mr^2\\
&= \binom{rm+1}{2}-\frac{m}{2}r\left((2e+1)r-2(q-1)q^{e-1}-1\right).
\end{align*}
We want now to minimize the function
$\binom{rm+1}{2}-\frac{m}{2}r\left((2e+1)r-2(q-1)q^{e-1}-1\right)$
 with respect to $e$. By removing the constant part in $e$, this clearly becomes equivalent to maximizing
\[T(e)\eqdef er-(q-1)q^{e-1}\]
over $\{0,\cdots,\floor{\frac{m}{2}}\}$.
We compute the discrete derivative $\Delta T : e \rightarrow T(e+1)-T(e)$,
$$
\Delta T(e)=T(e+1)-T(e)=((e+1)r-(q-1)q^{e+1-1})-(er-(q-1)q^{e-1})=r-(q-1)^2q^{e-1}.
$$
The maximum is attained at the least integer $e$ such that 
$\Delta T(e) \leq 0$. This corresponds to  the least integer $e$ such that $e \geq \log_q\left(\frac{r}{(q-1)^2}\right)+1$, 
i.e. $\ceil{\log_q\left(\frac{r}{(q-1)^2}\right)+1}=\eg$ minimizes the function $T$ and consequently
\[ \dim_{\Fq} \sq{(\Goppa{\xv}{\Gamma}^\perp)} \le \binom{rm+1}{2}-\frac{m}{2}r\left((2\eg+1)r-2(q-1)q^{\eg-1}-1\right).\]

\end{proof}
\end{cor}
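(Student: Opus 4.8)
The plan is to feed the structural results of Theorem \ref{thm:goppa_e_qeg_1} into the general square-of-trace inclusion of Proposition \ref{pr:square_general} and then reduce the whole estimate to a one-parameter discrete optimization. Set $\CC \eqdef \GRS{r}{\xv}{\yv}$ with $y_i = \frac{1}{\Gamma(x_i)}$, so that $\Goppa{\xv}{\Gamma}^\perp = \tr(\CC)$ by \eqref{eq: dual_alt}. First I would apply \eqref{eq:inclusion_general} of Proposition \ref{pr:square_general} to obtain
\[
\dim_{\Fq} \sq{(\Goppa{\xv}{\Gamma}^\perp)} \;\le\; \sum_{u=0}^{\floor{m/2}} \dim_{\Fq} \tr\left(\starp{\CC}{\CC^{q^u}}\right).
\]
The gain over the alternant bound of Theorem \ref{thm:alternant_e>0} comes entirely from the nesting phenomenon special to Goppa codes: fixing any integer $e \in \{0,\dots,\floor{m/2}\}$, the equalities \eqref{eq:equality} and the chain \eqref{eq:chain} of Theorem \ref{thm:goppa_e_qeg_1}, combined with Lemma \ref{lem:nested}, show that every summand with index $u \le e$ is already contained in the single code $\BC_e$. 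Hence these first $e+1$ terms collapse into one, and it remains only to bound $\BC_e$ together with the tail $u > e$.

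Next I would estimate the two blocks separately. For the collapsed block, $\BC_e$ is by definition the dual of an alternant code of degree $r(q^e - q^{e-1}+1)$ over an extension of degree $m$, so $\dim_{\Fq} \BC_e \le mr(q^e - q^{e-1}+1)$ since the dual of an alternant code of degree $\rho$ has dimension at most $m\rho$. For each remaining index $u \in \{e+1,\dots,\floor{m/2}\}$ I would use the generic spanning bound $\dim_{\Fq} \tr(\starp{\CC}{\CC^{q^u}}) \le mr^2$ furnished by the $mr^2$ explicit generators of Lemma \ref{lem:alternant}, except that when $m$ is even the middle index $u = m/2$ contributes only $\frac{mr^2}{2}$ by \eqref{eq:dimension}. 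A short parity check shows that for both even and odd $m$ the total tail contribution equals exactly $\left(\frac{m-1}{2}-e\right)mr^2$, giving the family of bounds
\[
\dim_{\Fq} \sq{(\Goppa{\xv}{\Gamma}^\perp)} \;\le\; rm(q^e - q^{e-1}+1) + \left(\frac{m-1}{2}-e\right)mr^2,
\]
valid for every admissible $e$. Expanding and regrouping the quadratic-in-$r$ part into a binomial coefficient rewrites the right-hand side as $\binom{rm+1}{2} - \frac{m}{2}r\left((2e+1)r - 2(q-1)q^{e-1} - 1\right)$.

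Finally I would optimize the free parameter $e$. Only the subtracted term depends on $e$, so minimizing the bound is equivalent to maximizing $T(e) \eqdef er - (q-1)q^{e-1}$ over $\{0,\dots,\floor{m/2}\}$. The discrete derivative $\Delta T(e) = T(e+1)-T(e) = r - (q-1)^2 q^{e-1}$ is strictly decreasing, so $T$ peaks at the least integer $e$ with $\Delta T(e)\le 0$, i.e. the least $e$ with $e \ge \log_q\!\left(\frac{r}{(q-1)^2}\right)+1$, which is precisely $\eg$. Substituting $e=\eg$ yields the claimed inequality. Since Theorem \ref{thm:goppa_e_qeg_1} already supplies both the identification of the $\tr(\starp{\CC}{\CC^{q^u}})$ with the $\BC_u$ and their nesting, the conceptual work is done upstream and what remains is bookkeeping; the two points I expect to be most error-prone are making the even-versus-odd $m$ count collapse cleanly to the single factor $\frac{m-1}{2}-e$ (the factor $\frac12$ on the middle term for even $m$ is exactly what reconciles the two parities) and verifying that the discrete optimum lands on $\eg$ rather than one step off, where one also checks that $\eg \ge 1$ so that $q^{e-1}$ is a genuine integer power at the optimum.
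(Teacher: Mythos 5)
Your proposal follows the paper's proof essentially step for step: the same use of Proposition \ref{pr:square_general} to reduce to $\sum_{u=0}^{\floor{m/2}}\tr\left(\starp{\CC}{\CC^{q^u}}\right)$, the same collapse of all terms with $u\le e$ into the single code $\BC_e$ via Theorem \ref{thm:goppa_e_qeg_1} and Lemma \ref{lem:nested}, the same bounds $\dim_{\Fq}\BC_e \le mr(q^e-q^{e-1}+1)$ and $mr^2$ on the tail (with the $\frac{mr^2}{2}$ refinement at $u=m/2$ for even $m$, which the paper leaves implicit but you spell out), and the same discrete optimization of $T(e)\eqdef er-(q-1)q^{e-1}$ landing on $e=\eg$. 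One minor citation correction: for indices $u$ with $f<u\le e$ (which do occur, e.g.\ for $q=2$ one has $\eg>f$) you must invoke the inclusion \eqref{eq:inclusion} rather than the equality \eqref{eq:equality}, since the latter is only guaranteed for $u\le f$; the inclusion is all your argument actually needs, and it is what the paper uses.
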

\begin{remark}
From the computation given in the proof of Corollary~\ref{cor: goppa_dim} and knowing that the upper bound is usually an equality, we also infer that with high probability
\[
\BC_\eg = \sum_{u=0}^{\eg} \tr\left(\starp{\CC}{\CC^{q^u}}\right).\] This has been verified experimentally.

\end{remark}

\section*{Conclusion}
In this article we revisited the distinguisher for random alternant and Goppa codes presented for the first time in \cite{FGOPT13} through a different approach, namely using squares of codes. With this simple but powerful tool we were able to provide explicitly the linear relationships determining the distinguisher in a more straightforward way. We managed therefore to rigorously prove a tight upper bound for the dimension of the square of the dual of an alternant or Goppa code, while \cite{FGOPT13} only provides an algebraic explanation which does not however represent neither an upper or a lower bound. Our proof is also valid in the case of non-binary Goppa case, for which the conjectured distinguisher is only demonstrated experimentally in \cite{FGOPT13}. By doing this we got an unifying explanation for the behavior of all Goppa codes, which does not make use of specific features of the binary case. Finally, we illustrated an interesting property of the structure of the square of the dual of any Goppa code, relating it to the dual of another alternant code.

This connection could be of help for a potential key-recovery attack. Indeed a distinguisher can sometimes be turned into an attack. In the code-based cryptography setting, this is for instance the case for GRS codes. The uncommon dimension of the square of a GRS code leads to a successful key-recovery for several proposed variants of McEliece cryptosystem built upon this family of codes for any rate \cite{CGGOT14}. Despite the strong relation between generalized Reed-Solomon codes and alternant codes, the same attacks cannot be carried over from the former to the latter, because of the additional subfield subcode structure. A similar idea has been successfully exploited for Wild Goppa codes though \cite{COT17}. But in this case, the distinguisher is based on considerations of square of Goppa codes themselves, which only apply to a very restricted class of parameters.  Indeed the attack can only work for extensions of degree $m=2$ and there is no way to go beyond it, because for $m>2$ the square code fills the whole space. In our case, our distinguisher is based on squaring the {\em dual} of a Goppa code (or an alternant code) and works for {\em any} field extension degree.

However,  the fact that the dual of a Goppa code is the trace of a generalized Reed-Solomon code rather than the subfield subcode of a generalized Reed-Solomon code seems to complicate significantly the attempts to turn this distinguisher into an attack. But again, having now a much better algebraic (and rigorous) explanation of why the distinguisher works, together with new algebraic results about products involved in the square of the dual of the Goppa code gives a much better understanding of the square code structure. This is clearly desirable and needed if we want to mount a key recovery attack based on these square code considerations. The hope is that this will ultimately lead to being able to attack McEliece schemes based on very high Goppa codes. As explained earlier, this will still not threaten the codes used in the aforementioned NIST competition, but this would break the 20 years old signature scheme \cite{CFS01} that is based on very high rate Goppa codes.

\bigskip
\noindent{\bf Acknowledgements}.
This work was supported in part by the ANR CBCRYPT project, grant ANR-17-CE39-0007  and by the ANR BARRACUDA project, grant ANR-21-CE39-0009, both of the French
Agence Nationale de la Recherche. 

\newcommand{\etalchar}[1]{$^{#1}$}

\end{document}